\newcommand{\etr}{{\rm etr}}
\newcommand{\Exp}[1]{{\rm E}[ \ensuremath{ #1 } ]  }
\newcommand{\Var}[1]{{\rm Var}[ \ensuremath{ #1 } ]  }
\newcommand{\Cov}[1]{{\rm Cov}[ \ensuremath{ #1 } ]  }
\newcommand{\bl}[1]{{\mathbf #1}}
\newcommand{\bs}[1]{{\boldsymbol #1}}
\newtheorem{prop}{Proposition}
\title{Equivariant and scale-free Tucker decomposition models}
\author{Peter David Hoff \\
Departments of Statistics and Biostatistics \\
University of Washington}
\date{\today}
\begin{document}

\maketitle

\begin{abstract}
Analyses of array-valued datasets
often involve
reduced-rank array approximations, typically obtained
via least-squares or truncations of 
array decompositions. 
However, least-squares approximations tend to  be noisy
in high-dimensional settings, and may not be appropriate
for arrays that include discrete or ordinal
measurements.
This article
develops methodology
to obtain low-rank model-based representations of
continuous, discrete and ordinal data arrays.
The model is based on a
parameterization of the  mean array
as a multilinear product of a reduced-rank core array and a set of
index-specific orthogonal eigenvector matrices.
It is shown how
orthogonally equivariant parameter estimates can be obtained
from Bayesian procedures under invariant prior distributions.
Additionally,
priors on the core array are developed that act as regularizers,  leading to
improved inference over the standard least-squares estimator, 
and  providing
robustness to
misspecification of the array rank. 
This model-based approach  is extended to accommodate discrete or ordinal
data arrays using a
semiparametric transformation model.
The resulting low-rank representation is scale-free, in the sense that it is invariant to monotonic transformations of the data array.
In an example analysis of a multivariate discrete 
network dataset, 
this scale-free approach provides a more
complete description of data patterns. 

\smallskip

\noindent {\it Keywords:}
factor analysis, rank likelihood,  social network, tensor,
Tucker product.

\smallskip

\end{abstract}

\section{Introduction}
Many datasets are naturally represented
as multiway arrays, often referred to as tensors. 
For example, 
data gathered under all combinations of levels of three 
conditions can be 
expressed as
a three-way array 
$\bl Y  =\{ y_{i,j,k} : i\in \{1,\ldots, n_1\} , j \in \{1,\ldots, n_2\} , 
  k \in \{1,\ldots, n_3\}\}$. 
The index sets are referred to as the modes of the array, 
and
an array with $K$ modes is typically referred to as a $K$-way array. 
Such array-valued datasets are common in several disciplines, 
including 
chemometrics, signal processing and psychometrics. 
Another class of array-valued data includes multivariate 
relational networks, which consist of several types of relational measurements
between pairs of nodes.  
Such a dataset may be represented as a three-way array  
$\bl Y \in \mathbb R^{n\times n\times p}$,   
where $n$ is the number of nodes, $p$ is the number of relation types, 
and the entries of $\bl Y$ are such that $y_{i,j,k}$ is the 
value of the $k$th relation type from node $i$ to $j$. 
For example, 
$y_{i,j,1} $ may give the number of emails sent from person $i$ to
person $j$ 
and $y_{i,j,2}$ may encode an evaluation
of  $i$'s  friendship to $j$ 
measured on an ordinal scale.
In this case, the three modes of the array correspond to 
the initiator of the relation, the target of the relation and 
the relation type, respectively. 

A popular method for describing heterogeneity in 
array-valued datasets is with array decompositions.  
One category of decompositions are the ``Tucker decompositions''  
\citep{tucker_1964, tucker_1966,kolda_bader_2009}, which 
express a $K$-way data array 
$\bl Y$  as 
$\bl Y = \bl S \times  \{ \bl U_1,\ldots, \bl U_K\}$, 
where 
$\bl S$ is a $K$-way core array, 
``$\times$'' is a  a  multilinear operator known as  the Tucker product and 
$\{\bl U_1,\ldots, \bl U_K\}$ is a collection of 
mode-specific factor matrices. 
\citet{delathauwer_demoor_vandewalle_2000} 
study a particular type of Tucker decomposition  in which 
the $\bl U_k$'s are orthogonal, and argue 
that this ``higher-order'' singular value decomposition (HOSVD) is 
a natural extension of the matrix SVD
to  arrays, with the core array $\bl S$ playing a role 
analogous to that of the 
singular values of a matrix. 
Data analysis based on this decomposition often proceeds by 
obtaining a low-rank representation of $\bl Y$ either 
via truncation of the core array or with a 
least-squares approximation, and then 
using its  mode-specific singular vectors
to describe the heterogeneity 
in the entries of $\bl Y$ along each of its $K$ modes.

While providing a relatively simple approach to exploratory data-analysis, 
least-squares methods 
may be limited in terms of their performance and  applicability. 
For example, 
least-squares methods tend to 
be noisy in multiparameter estimation problems,
leading many researchers to favor 
regularized procedures instead. 
Recent work on the analysis 
of matrix-valued datasets  
indicates that soft-thresholding the singular values of a
data  matrix can lead to improved estimation of its 
mean matrix as compared to a least-squares approach
\citep{
mazumder_hastie_tibshirani_2010, 
cai_candes_shen_2010,
josse_sardy_2013}. 
Penalized 
approaches have also been studied in the context of array-valued data:
Recent theoretical work has focused on array completion problems,
in which the  task is to recover a reduced-rank array based on
random linear combinations
of its elements \citep{liu_musialski_przemyslaw_wonka_ye_2009, mu_huang_wright_goldfarb_2013}. The algorithms studied typically involve 
finding the minimum rank among 
arrays  that match the data at the observed entries.
Variants of these procedures include finding arrays that minimize
different criteria while still matching the observed data,
or by minimizing a residual sum of squares subject to a penalty
on the fitted array \citep{tomioka_suzuki_hayashi_kashima_2011}.

However,  
such approximations of the raw data
may be inappropriate when the data are
binary, ordinal or otherwise non-normally distributed.
For example, Section 5 of this article
considers an analysis of skewed, discrete
multivariate relational data. These data,
obtained from the GDELT project
(\citet{leetaru_schrodt_2013}, \href{http://gdelt.utdallas.edu/}{gdelt.utdallas.edu}),
consist of  weekly summaries of
20 different types of actions
between the 30 most active countries in the GDELT database in
2012. These data can be represented as a $30\times 30 \times 52 \times 20$
four-way array $\bl Y$, with entries
$\{ y_{i,j,k,t} : 1 \leq i , j \leq 30, i\neq j, 1\leq k \leq 20, 
  1\leq t\leq 52\}$,
where $y_{i,j,k,t}$ is the number of days in
week $t$ in which
country $i$ took action $k$ with country $j$ as the target.
A least-squares approximation to these data is
problematic for several reasons, one of which is that
such an approximation
predominantly
represents the  small number of large entries of the array, and is therefore
unrepresentative of ``most'' of the data. 

As an alternative to least-squares procedures,  
this article develops a model-based version of 
a penalized Tucker decomposition, and an extension that can accommodate the 
analysis of discrete, ordinal or otherwise non-normal data. 
The approach is Bayesian, 
in that the penalty term can be viewed as a prior distribution on the 
unknown parameters, and estimates can be obtained via Markov chain 
Monte Carlo methods. 
This Bayesian model-based approach is similar to that of 
\citet{chu_ghahramani_2009}, who present a Tucker decomposition 
model and prior in which the core array $\bl S$ and factor matrices 
$\{ \bl U_1,\ldots, \bl U_K\}$ all have 
i.i.d.\ standard normal entries. 
Unlike their approach, 
this article parameterizes the model so that the factor matrices 
are orthogonal, 
as in the HOSVD of 
\citet{delathauwer_demoor_vandewalle_2000}. 
This parameterization facilitates construction of a 
class of prior distributions  for which 
posterior inference is both scale-equivariant and 
orthogonally-equivariant. 
Additional 
identifiability considerations lead to a particular 
form for a prior distribution over the core array $\bl S$. 
This prior 
allows for mode-specific penalization of the 
singular values, and also has an interpretation 
as a version of normal factor analysis for array-valued data. 

The
work presented here is related to some recently developed
statistical models that make use of the multilinear Tucker product.
The core array $\bl S$ is penalized using 
a class of array normal distributions, generated by the multilinear Tucker
product
\citep{hoff_2011b}. 
\citet{xu_yan_qi_2012} develop a prior over the array
normal model in which the mode-specific covariance matrices are
functions of a potentially infinite set of latent features.
In a similar vein, 
\citet{fosdick_hoff_2012}  
develop a version of factor analysis based on the array normal model. 
The Tucker product has also been used
to construct priors
in applications where it is the parameters in the model
that are arrays:
\citet{bhattacharya_dunson_2012} use a Tucker product to develop
a prior over probability distributions for multivariate
categorical data, and
\citet{volfovsky_hoff_2012} use a collection
of connected array normal distributions
as a prior over parameter arrays in ANOVA decompositions.
Regarding penalization,
\citet{allen_2012} has proposed
a sparsity penalty on the factor matrices of a Tucker decomposition,
thereby
encouraging zeros in their entries.
While appropriate in some applications, procedures based on
such a sparsity penalty will  not be orthogonally equivariant.
In contrast, the uniform priors
on the factor matrices  used in this article
lead to orthogonally equivariant estimates,
and penalization is focused on the core array in order to 
encourage low-rank approximations
to the data.

An outline of this paper is as follows: 
The next section provides a brief review of 
array rank and Tucker decompositions. 
In Section 3 a 
parameterization of the Tucker decomposition model is presented, 
along with a class of prior distributions 
that allow for equivariant estimation of the model 
parameters.   Section 4 develops
 a  subclass of priors
that allows
for mode-specific penalization of the singular values. 
In a simulation study, 
this prior distribution is shown to perform as well as an 
``oracle'' prior when no mode-specific penalization is warranted, 
and greatly outperforms such a prior when the rank of the model 
is misspecified. 
This methodology  is extended in Section 5 to accommodate 
discrete, ordinal and non-normal data 
via a semiparametric transformation model, 
allowing for scale-free reduced-rank representations of array data 
of diverse types. 
This extension  is illustrated with an analysis of 
discrete multivariate international relations data.
A discussion follows in Section 6.


\section{Review of array rank and Tucker decompositions} 
Recall that the rank of a matrix $\bl M\in \mathbb R^{n_1\times n_2}$ 
is equal to the dimension of the linear space 
spanned by the columns (or rows) of $\bl M$. 
Now suppose $\bl M\in \mathbb R^{n_1\times n_2\times n_3}$  
is a three-way array, with elements 
$\{ m_{i,j,k}  : 1\leq i \leq n_1 ,1\leq j \leq n_2,1\leq k \leq n_3 \}$.
The notion of array rank considered by 
\citet{tucker_1964}, \citet{delathauwer_demoor_vandewalle_2000} and others 
is defined by  
the ranks of various reshapings of $\bl M$
into matrices, called {\it matricizations}. 
For example,
the mode-1 matricization $\bl M_{(1)}$ of $\bl M$ is the
$n_1 \times (n_2 n_3)$ matrix having column vectors of the form
$\bl m_{j,k} = (m_{1,j,k},\ldots, m_{n_1,j,k})^T$, that is, 
elements of $\bl M$ with varying values of the first 
index and fixed values of the second and third indices. 
Heterogeneity in the values of $\bl M$ ascribable to 
heterogeneity in the first index set can be described in terms 
of the linear space spanned by the columns of 
$\bl M_{(1)}$. The dimension $r_1$ of this linear space (which 
is equal to the rank of $\bl M_{(1)}$) is called 
the {\it mode-1 rank} of $\bl M$. 
The mode-2 and mode-3 matricizations of $\bl M$ can be formed similarly, 
and their ranks provide the mode-2 rank $r_2$ and mode-3 rank $r_3$, 
respectively. The array rank of $\bl M$ is 
the vector $\bl r= (r_1,r_2,r_3)$, and is sometimes referred to as the multilinear rank.  Unlike the row and column ranks of a matrix, 
the ranks corresponding 
to the different modes of an array are not generally equal. 

Any matrix $\bl M \in \mathbb R^{n_1\times n_2}$ can
be expressed in terms of its SVD
$\bl M = \bl U_1 \bl S \bl U_2^T$
where  $\bl S =  \text{diag}(s_1,\ldots, s_r)$,
 $\bl U_1\in \mathcal V_{r,n_1}$, $\bl U_2\in \mathcal V_{r,n_2}$ 
and  $r\leq n_1\wedge n_2$ is the rank of $\bl M$.
Here, $\mathcal V_{r,n}$ is the space of $n\times r$ matrices with 
orthonormal columns, known as the Stiefel manifold. 
As shown by
\citet{delathauwer_demoor_vandewalle_2000},
an analogous representation
holds for any array. The analogy is most easily seen 
via vectorization: 
The SVD of a matrix $\bl M$ yields a representation of 
$\bl m=\text{vec}(\bl M)$ 
as
$\bl m = (\bl U_2 \otimes \bl U_1 )\, \bl s$,
where $\bl s= \text{vec}(\bl S)$ and 
``$\otimes$'' is the Kronecker product. 
Similarly, 
every  $K$-way array $\bl M$
of dimension $n_1\times \cdots \times n_K$ and rank 
$\bl r=( r_1,\ldots, r_K)$
can be expressed as 
\begin{equation}
 \bl m = ( \bl U_K \otimes \cdots \otimes \bl U_1 ) \, \bl s,  
\label{eq:tkprod}
\end{equation}
where $\bl m$ is the vectorization of $\bl M$, $\bl U_k \in \mathcal V_{r_k,n_k}$
for $k\in \{1,\ldots, K\}$ and $\bl s$ is the vectorization of an 
$r_1\times \cdots \times r_K$ array $\bl S$ known as the ``core array.'' 
This representation is 
often referred to as the higher-order SVD (HOSVD). 
More generally, any representation of $\bl m$
of the form (\ref{eq:tkprod}), without
$\bl U_1,\ldots, \bl U_K$
necessarily being orthogonal, is known as a ``Tucker decomposition.''

An equivalent representation of $\bl M$ that retains its array structure is 
obtained using the so-called ``Tucker product'' \citep{tucker_1964}
of the core array $\bl S$ with the list of factor
matrices $\bl U_1,\ldots, \bl U_K$. 
This representation expresses $\bl M$ as
\begin{equation} 
\bl M = \bl S \times  \{ \bl U_1,\ldots, \bl U_K\}, 
\label{eq:tprod}
\end{equation}
where the Tucker product ``$\times$''
is defined by the equivalence 
between Equations \ref{eq:tkprod} and \ref{eq:tprod}. More 
generally,  For $\bl A\in \mathbb R^{n_1\times \cdots \times n_K}$, 
 $\bl B\in \mathbb R^{r_1\times \cdots \times r_K}$ 
and $\bl C_k \in \mathbb R^{n_k \times r_k}$, $k=1,\ldots, K$,  
$\bl A = \bl B \times  \{ \bl C_1,\ldots, \bl C_K\}$ means that 
  $\text{vec}(\bl A) = ( \bl C_K \otimes \cdots \otimes \bl C_1 ) \, \text{vec}(\bl B) $. 

For the calculations that follow it will be useful to re-express a
Tucker decomposition
of $\bl M$ in terms of its matricizations. If $\bl M$ can 
be expressed as in (\ref{eq:tkprod}) or   (\ref{eq:tprod}), then it also 
follows that for each $k\in \{1,\ldots, K\}$, 
\begin{equation}
\bl M_{(k)}  =  \bl U_{k} \, \bl S_{(k)} \, (\bl U_K \otimes \cdots 
   \bl U_{k+1} \otimes  \bl U_{k-1} \otimes \cdots 
\otimes \bl U_1 )^T  \\
 \equiv \bl U_{k} \, \bl S_{(k)} \, \bl U_{-k}^T, 
\end{equation}
where $\bl M_{(k)}$ and $\bl S_{(k)}$ are the mode-$k$ 
matricizations of $\bl M$ and $\bl S$ respectively.

\section{A model-based Tucker decomposition for arrays}
A commonly used model of low-dimensional structure for a matrix-valued dataset
$\bl Y \in \mathbb R^{n_1\times n_2}$  is  that $\bl Y$
is equal to some mean matrix $\bl M$ of rank $r<n_1\wedge n_2$, 
plus an error matrix $\sigma \bl E$ 
having 
i.i.d.\  mean-zero entries with variance $\sigma^2$. 
Let $\bl M= \bl U_1 \bl D \bl U_2^T$ be the  SVD
of $\bl M$ and $\bl S= \bl D/\sigma$ 
be the singular values scaled by the error standard deviation $\sigma$. 
This model can be parameterized as
$\bl Y = \sigma  \bl U_1 \bl S \bl U_2^T + \sigma \bl E$, or alternatively 
in vector form 
as
\[  \bl y = \sigma (\bl U_2 \otimes \bl U_1) \, \bl s   + \sigma \bl e, \]
where $\bl y$, $\bl s$ and $\bl e$ are the vectorizations of 
$\bl Y$, $\bl S$ and $\bl E$ respectively. 

Now consider an analogous model for an array 
$\bl Y \in \mathbb R^{n_1\times \cdots\times n_K}$.  As in the matrix case, 
the model is $\bl Y = \bl M + \sigma \bl E$, where $\bl M$ is 
an array with array rank $\bl r$ and 
$\bl E$ is a mean-zero error array.  
Equation \ref{eq:tkprod} says that this model 
can be expressed as 
\begin{equation} 
  \bl y = \sigma ( \bl U_K \otimes \cdots \otimes \bl U_1 ) \, \bl s + 
 \sigma \bl e  
\label{tsvdm}
\end{equation}
where 
$\bl s \in \mathbb R^{r_1\cdots r_K}$ and 
 $\bl U_k \in \mathcal V_{r_k,n_k}$ for each $k=1,\ldots, K$. 
An equivalent representation in terms of the Tucker product is that 
\[
  \bl Y = \sigma \bl S \times \{ \bl U_1,\ldots, \bl U_K\} + \sigma \bl E. \]
This section discusses estimation of the unknown parameters
$(\sigma, \bl U , \bl S)$
in this Tucker decomposition model (TDM) 
when the error $\bl E$ is assumed to consist of  i.i.d.\ standard 
normal random variables. 
Results on optimal equivariant estimation in the case that 
$\bl S$ is known are used to motivate certain 
priors for equivariant 
Bayesian inference in the more realistic case that $\bl S$ 
is unknown. 
It is shown that 
posterior inference under such prior distributions can be made 
with a relatively straightforward 
Markov chain Monte Carlo (MCMC) algorithm, based on 
Gibbs sampling.

\subsection{Equivariant estimation}
First consider the (unrealistic) case that the core array $\bl S$  
is known. 
Letting $n= n_1\cdots n_K$, $r= r_1\cdots r_K$ and  
$\mathcal U = \{ \bl U : \bl U =  \bl U_K \otimes \cdots \otimes \bl U_1, 
 \bl U_k \in \mathcal V_{r_k,n_k} \}$, 
 the normal TDM can be expressed as
\begin{equation}
 \bl y  = \sigma \bl U \bl s + \sigma \bl e  \label{eq:rtsvdmod}  \ , \  
 \bl e   \sim  N_n(  \bs 0 , \bl I )  \ , \
(\sigma , \bl U )   \in   \mathbb R^+  
   \times \mathcal U .  
\end{equation}
Let
 $\mathcal W = \{  
  \bl W : \bl W =  \bl W_K \otimes \cdots \otimes \bl W_1, 
 \bl W_k \in \mathcal O_{n_k}  \}$
be the space of Kronecker products of orthogonal matrices, and 
note that $\bl W \bl U \in \mathcal U$ for all $\bl W \in \mathcal W $
and $\bl U \in \mathcal U$.
It follows that the model (\ref{eq:rtsvdmod}) is invariant under the
group of transformations  on $\mathcal Y$
given by $\mathcal G = \{ g: \bl y \rightarrow  
   a \bl W \bl y, a >0, \bl W \in \mathcal W   \}$, which
 induces a group
$\bar{ \mathcal G}$ on the parameter space given by
$\bar{ \mathcal G} = \{  \bar g : ( \sigma, \bl U ) \rightarrow
 (a \sigma, \bl W \bl U  )   \}$.
This motivates the use of equivariant estimators of $\sigma$
and $\bl U$. For example, it is natural to prefer estimators such 
that  $\hat \sigma ( a \bl W \bl y) = a \hat \sigma(\bl y)$,
so that the scale changes to the data result in the same change to
the estimate of the scale
parameter $\sigma$.   Similarly, one may prefer
estimators of $\bl U$ such that 
$\hat {\bl U}( a \bl W \bl y ) = \bl W  \hat { \bl U }(\bl y)$
and estimators of $\bl m = \sigma \bl U \bl s$ such that 
$\hat {\bl m}(a\bl W \bl y) = a \bl W \hat {\bl m}(\bl y)$.

As with many invariant statistical models,
risk-optimal equivariant decision rules 
can be obtained as  Bayes 
rules under a prior distribution derived from the group:
\begin{prop} 
Let $\theta = (\sigma, \bl U)$ and $\Theta = \mathbb R^+ \times 
  \mathcal U$. Under any invariant loss function  
$L(d,\theta)$  the minimum risk equivariant decision rule 
$\delta(\bl y)$ is given for each $\bl y$ by 
the minimizer in $d$ of 
\[   \int L(d,\theta) p(\bl y|\theta) \pi_I(d \theta),  \] 
where for measurable sets $A \subset \mathbb R^+$ and 
  $B\subset \mathcal U$, 
$\pi_I(A\times B ) = \pi_\sigma(A) \times \pi_U(B)$, with 
$\pi_\sigma(A) = \int_A  \sigma^{-1}  \, d\sigma $ and
$\pi_U$ corresponding to the (proper) probability distribution of 
   $\bl U_K\otimes \cdots \otimes \bl U_1$ when
each  $\bl U_k$ is uniformly distributed on $\mathcal V_{r_k,n_k}$. 
\label{prop:invdr}
\end{prop}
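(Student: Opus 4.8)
\emph{Proof proposal.} The plan is to recognize this as an instance of the classical principle that, in an invariant decision problem, the minimum risk equivariant (MRE) rule coincides with the generalized Bayes rule against right Haar measure on the acting group (see, e.g., Eaton's or Berger's treatments of equivariant estimation), and then to verify that the measure $\pi_I$ in the statement is exactly that right Haar measure carried over to $\Theta$. First I would fix the group: writing a transformation as $\bl y\mapsto a\bl W\bl y$ with $a\in\mathbb R^+$ and $\bl W\in\mathcal W$, the acting group is, up to a harmless finite identification inside the Kronecker product, the direct product $G=\mathbb R^+\times\mathcal W$ with $\mathcal W$ compact and $\mathbb R^+$ under multiplication. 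The induced action $\bar g:(\sigma,\bl U)\mapsto(a\sigma,\bl W\bl U)$ on $\Theta=\mathbb R^+\times\mathcal U$ is transitive --- $\mathbb R^+$ acts transitively on itself, and since each $\mathcal O_{n_k}$ acts transitively on the Stiefel manifold $\mathcal V_{r_k,n_k}$, the group $\mathcal W$ acts transitively on $\mathcal U$ --- so every equivariant rule has risk constant in $\theta$, and ``the MRE rule'' means the equivariant rule minimizing this constant.

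Second I would identify the prior. Because $\mathcal W$ is compact and $(\mathbb R^+,\times)$ is abelian, $G$ is unimodular; its left and right Haar measures coincide and equal $\sigma^{-1}d\sigma$ on the $\mathbb R^+$ factor times the invariant probability measure on $\mathcal W$ (unimodularity is what makes this clean, since in general it is the right Haar measure that produces the MRE rule). Transporting this measure to $\Theta$ through $(a,\bl W)\mapsto(a,\bl W\bl U_0)$ for a fixed $\bl U_0\in\mathcal U$, the $\sigma$-marginal is exactly $\pi_\sigma(d\sigma)=\sigma^{-1}d\sigma$, and the $\bl U$-marginal is the image of the invariant probability measure on $\mathcal W$, which, since $\mathcal W$ is a finite quotient of $\mathcal O_{n_1}\times\cdots\times\mathcal O_{n_K}$ acting coordinatewise, factors into the $\mathcal O_{n_k}$-invariant probability measures on the $\mathcal V_{r_k,n_k}$ --- i.e.\ the uniform Stiefel distributions --- combined through the Kronecker product. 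This is precisely the law $\pi_U$ of $\bl U_K\otimes\cdots\otimes\bl U_1$ with the $\bl U_k$ independent and uniform, and as $G$ is a direct product acting on the two coordinates of $\Theta$ separately, the transported measure is the product $\pi_\sigma\times\pi_U=\pi_I$.

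Third, rather than rely on the cited general theory alone, I would verify directly the two properties that characterize the MRE rule. (i) The rule $\delta(\bl y)$ minimizing $\int L(d,\theta)p(\bl y|\theta)\pi_I(d\theta)$ over $d$ is equivariant: in the objective evaluated at $g\bl y$, using the invariance of the Gaussian likelihood (the map $\bl y\mapsto a\bl W\bl y$ is linear, so its Jacobian is free of $\theta$), changing variables $\theta\to\bar g\theta$, invoking the $\bar{\mathcal G}$-invariance of $\pi_I$, and applying the invariance of $L$ turns the objective at $g\bl y$ into the objective at $\bl y$ re-expressed in the decision $\bar g^{-1}d$, which forces $\delta(g\bl y)=\bar g\,\delta(\bl y)$. (ii) This $\delta$ minimizes the constant risk over all equivariant rules: with $m(\bl y)=\int p(\bl y|\theta)\pi_I(d\theta)$, finite for every $\bl y\neq\bl 0$ because $\bl s$ is fixed, the exponential factor drives the $\sigma$-integrand to zero as $\sigma\to0$ once $\|\bl y\|>0$, and the integrand is $O(\sigma^{-n-1})$ as $\sigma\to\infty$, a Fubini rearrangement writes the (constant) risk of any equivariant $\delta'$ as a positive multiple of $\int_{\mathcal Y}\big(\text{posterior expected loss of }\delta'(\bl y)\big)\,m(\bl y)\,d\bl y$; since $m(\bl y)>0$, this is minimized over equivariant $\delta'$ by pointwise minimization of the posterior expected loss, which by (i) is achieved within the equivariant class, namely at $\delta$.

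The step I expect to be the real work is the Fubini rearrangement in (ii), which converts a constant frequentist risk into an integral of posterior risk. Making it precise requires disintegrating the Haar measure along the orbits of $G$ in $\mathcal Y$ (equivalently, choosing a cross-section), checking that the induced weight on $\mathcal Y$ equals $m(\bl y)\,d\bl y$ up to a constant, and justifying the interchanges of integration --- all underwritten by the finiteness of $m(\bl y)$ and the fact that $\pi_U$ is a genuine probability measure. The remaining ingredients --- transitivity of the orthogonal group on Stiefel manifolds and the direct-product bookkeeping for $G$ --- are routine.
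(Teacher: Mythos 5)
Your proposal is correct and follows essentially the same route as the paper: invoke the classical result that the minimum risk equivariant rule is the generalized Bayes rule under right Haar measure (the paper cites Theorem 6.5 of Eaton, 1989), check transitivity of $\bar{\mathcal G}$ on $\Theta$, identify the right Haar measure as $\sigma^{-1}d\sigma$ times the invariant probability on $\mathcal W$ (you via unimodularity of $\mathbb R^+\times\mathcal W$, the paper by a direct change-of-variables verification), and transport it to $\Theta$ to obtain $\pi_\sigma\times\pi_U$ with $\pi_U$ the law of the Kronecker product of uniform Stiefel matrices. Your step (iii), re-deriving the general MRE-equals-Bayes theorem via orbit disintegration, is extra self-contained work the paper avoids by citation; the only technical point the paper flags that you leave implicit is that $\mathcal G$ acts properly only on $\mathbb R^n\setminus\{\bs 0\}$, a set of probability one.
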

This result is an application of more general results from
invariant decision theory 
(a proof is in the Appendix).
To put the result more simply, optimal equivariant decision rules
can be obtained from the posterior distribution of  
$(\sigma, \bl U)$  under an improper
prior for $\sigma$ with density $1/\sigma$ 
and independent uniform priors for $\bl U_1,\ldots, \bl U_K$. 
In what follows, $\pi_\sigma$ and $\pi_U$  will refer to 
either these measures or their densities, depending on the context.

Unfortunately, uniformly optimal equivariant decision rules no longer 
exist 
under this group
when the core array $\bl s$ is unknown, 
as the best equivariant estimator will depend on $\bl s$. 
This article focuses
attention on 
Bayesian inference for $(\sigma, \bl U, \bl s)$  using  
prior distributions with densities of the form 
$\pi(\sigma , \bl U , \bl s) = \pi_\sigma(\sigma)  \pi_U(\bl U) \pi_s(\bl s)$, 
where $\pi_s(\bl s)$ is a proper probability density. 
Although not corresponding to a proper joint prior distribution
(because of the improper prior on $\sigma$), 
such densities can be used to construct proper 
posterior distributions that provide
estimates of functions of $(\sigma,\bl U,\bl s)$
that are equivariant with respect to $\mathcal G$ and 
$\bar {\mathcal G} = \{ \bar g: (\sigma,\bl U, \bl s) \rightarrow 
   (a\sigma, \bl W \bl U, \bl s)\}$. 
Addressing the propriety of such a  posterior first,  
for each $\bl y \in \mathbb R^n$ 
define a function $f(\sigma, \bl U , \bl s: \bl y)$  so that 
\[  f( \sigma, \bl U , \bl s: \bl y) \propto
     p(\bl y  |\sigma, \bl  U , \bl s) \times  \pi(\sigma, \bl U, \bl s) ,\]
where $p(\bl y|\sigma, \bl U, \bl s) $ is the normal sampling 
density of $\bl y$, having mean $\sigma \bl U \bl s$ and 
variance $\sigma^2 \bl I$. 
If $f$ is integrable in $(\sigma, \bl U, \bl s)$ for the observed value 
of $\bl y$,  
a ``posterior'' probability distribution can be defined via the 
density
\begin{equation} \pi( \sigma, \bl U , \bl s | \bl y)   = 
     \frac{f(\sigma, \bl U , \bl s :\bl y) }
{ \int f(\sigma, \bl U , \bl s :\bl y) \ d\sigma  d\bl U  d\bl s } \, .  
\label{eq:gbayespost}
\end{equation}
That $f$ is generally integrable can be seen by first 
integrating with respect to $\sigma$:
\begin{align*} 
\int_0^\infty f(\sigma, \bl U, \bl s :\bl y )  \, d\sigma &= 
 \pi_U(\bl U) \pi_s(\bl s) \int_0^\infty  
    p(\bl y| \sigma, \bl U , \bl s) \pi_\sigma(\sigma) \, d\sigma  \\
&= 
 \pi_U(\bl U) \pi_s(\bl s) \int_0^\infty (2\pi)^{-n/2} \sigma^{-n-1} \exp( -\sigma^{-2} ||\bl y -\bl U \bl s||^2/2 ) \, d\sigma   \\
&= \pi_U(\bl U) \pi_s(\bl s) \times   [\tfrac{1}{2} \pi^{-n/2} \Gamma(n/2 )]  
 \times     ||\bl y-\bl U\bl s||^{-n} \, . 
\end{align*}
Now
$||\bl y - \bl U \bl s|| \geq   ||\bl y - \hat {\bl m}|| 
 $, where $\hat {\bl m}$ is the least squares
estimate of $\bl m$.  Since $\hat {\bl m}$ is of  reduced rank, 
 $||\bl y - \hat {\bl m}||>0$ unless the array rank 
of $\bl y$ is less than or equal to that of 
the fitted rank. 
Presuming this is not the case,
it follows that  $||\bl y-\bl U\bl s||^{-n} $ is bounded above
by  $||\bl y - \hat {\bl m} ||^{-n}$. Since the
 priors for $\bl U$ and $\bl s$ are proper, the integral of
   $||\bl y -  {\bl U} {\bl s}||^{-n}$ with respect to
 $\pi_U(\bl U)$ and $\pi_s(\bl s)$ is finite. 
Therefore, $f( \sigma, \bl U , \bl s: \bl y) $ is integrable 
and (\ref{eq:gbayespost}) is a proper probability density. 

As stated above, the decision rules  obtained from such a posterior 
are not globally risk optimal among equivariant rules, 
as optimal rules for $(\sigma, \bl U)$ depend on the unknown 
value of $\bl s$. However, such posterior distributions still 
provide equivariant inference in the following sense: 
\begin{prop}
Let the prior for $\theta=(\sigma , \bl U, \bl s)$ be such that 
the marginal prior for $(\sigma, \bl U)$ is the invariant
prior $\pi_I$ and 
$\bl s$ is independent of $(\sigma, \bl U)$.
Then for any 
$a>0$, $\bl W\in \mathcal W$ and functions 
$ g:\bl y \rightarrow  a \bl W \bl y$ 
and  $\bar g: (\sigma, \bl U , \bl s ) \rightarrow  
     (a \sigma, \bl W \bl U , \bl s) $, 
\[  \Pr( \theta \in A |\bl y ) = \Pr(\theta\in \bar g A |g\bl y)    \]
\label{prop:eqinf}
for all measurable subsets $A$ of $\mathbb R^+ \times \mathcal U \times 
  \mathbb R^r$. 
\end{prop}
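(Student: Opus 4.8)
The plan is to verify directly that the unnormalized posterior, regarded as a measure on $\Theta = \mathbb R^+\times\mathcal U\times\mathbb R^r$, is carried by $\bar g$ onto itself times a constant that does not depend on the parameter; that constant then cancels when the posterior is normalized. To avoid tracking Jacobians I would work throughout with the prior measure itself, $\nu(d\theta)=\pi_\sigma(d\sigma)\,\pi_U(d\bl U)\,\pi_s(d\bl s)$, rather than with a Lebesgue reference measure, and write the posterior as
\[
\Pr(\theta\in A\mid\bl y)\;=\;\frac{\displaystyle\int_A p(\bl y\mid\theta)\,\nu(d\theta)}{\displaystyle\int_\Theta p(\bl y\mid\theta)\,\nu(d\theta)},
\]
which is a well-defined probability exactly when the integrability argument preceding the statement applies. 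That argument applies to $\bl y$ by hypothesis, and it also applies to $g\bl y=a\bl W\bl y$, since $g$ is a bijection of $\mathbb R^n$ that acts mode-wise through the orthogonal matrices $\bl W_1,\ldots,\bl W_K$ and hence leaves the multilinear rank of the data array unchanged; so $\Pr(\,\cdot\mid g\bl y)$ is also a well-defined probability.

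Two facts do the work. First, $\nu$ is invariant under $\bar g$, and I would check each factor separately: $\pi_\sigma(d\sigma)=\sigma^{-1}d\sigma$ is Haar measure on the multiplicative group $\mathbb R^+$ and so is invariant under $\sigma\mapsto a\sigma$; $\pi_U$, being the law of $\bl U_K\otimes\cdots\otimes\bl U_1$ with the $\bl U_k$ independent and uniform on $\mathcal V_{r_k,n_k}$, is invariant under $\bl U\mapsto\bl W\bl U$ for every $\bl W=\bl W_K\otimes\cdots\otimes\bl W_1\in\mathcal W$, because left multiplication of $\bl U_k$ by the orthogonal matrix $\bl W_k$ preserves both the uniform law on $\mathcal V_{r_k,n_k}$ and the independence of the factors; and $\pi_s$ is untouched since $\bar g$ fixes $\bl s$. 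Second, the normal sampling density is relatively invariant with a known multiplier: writing $\bl m=\sigma\bl U\bl s$, the density $p(\,\cdot\mid\bar g\theta)$ has mean $a\bl W\bl m$ and variance $a^2\sigma^2\bl I$, so using $\|a\bl W\bl y-a\bl W\bl m\|^2=a^2\|\bl y-\bl m\|^2$ one obtains $p(g\bl y\mid\bar g\theta)=a^{-n}\,p(\bl y\mid\theta)$, the only change coming from the normalizing constant $(2\pi\sigma^2)^{-n/2}$.

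Combining these, I would change variables $\theta=\bar g\eta$ in the numerator and denominator of $\Pr(\theta\in\bar g A\mid g\bl y)$. Since $\bar g$ is a bijection of $\Theta$ and $\nu$ is $\bar g$-invariant, $\int_{\bar g A}h(\theta)\,\nu(d\theta)=\int_A h(\bar g\eta)\,\nu(d\eta)$ for every nonnegative measurable $h$; taking $h(\theta)=p(g\bl y\mid\theta)$ and inserting the multiplier identity,
\[
\int_{\bar g A}p(g\bl y\mid\theta)\,\nu(d\theta)=\int_A p(g\bl y\mid\bar g\eta)\,\nu(d\eta)=a^{-n}\int_A p(\bl y\mid\eta)\,\nu(d\eta),
\]
and likewise for $A=\Theta$. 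Dividing, the factor $a^{-n}$ cancels and $\Pr(\theta\in\bar g A\mid g\bl y)=\Pr(\theta\in A\mid\bl y)$, which is the assertion.

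I expect the only real care to be needed in the bookkeeping around the improper prior on $\sigma$: because $\nu$ is not finite, the argument must stay at the level of the measure $\nu$ and of ratios of integrals, never forming a normalized joint prior, and one must separately confirm (as above, via the rank-preservation property of $g$) that the posterior normalizing integral is finite at $g\bl y$ as well as at $\bl y$. The remaining measure-theoretic points — measurability of $\bar g$, applicability of the change-of-variables formula for an invariant measure, and Fubini for the intermediate integrations — are routine and parallel the integrability computation already carried out in the text.
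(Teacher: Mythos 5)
Your proof is correct and follows essentially the same route as the paper's: express both posteriors as ratios of integrals against the prior measure, use the multiplier identity $p(g\bl y\mid\bar g\theta)=a^{-n}p(\bl y\mid\theta)$, and invoke invariance of $\pi_I$ under $\bar g$ to change variables, with the constants cancelling in the ratio. The only differences are cosmetic — you verify the invariance of each prior factor directly rather than citing the Haar-modulus cancellation, and you add the (worthwhile) observation that the normalizing integral is also finite at $g\bl y$ because $g$ preserves the rank condition.
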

A proof is in the Appendix. 
The result says that, using such a prior, 
the  belief that the correct $\theta$-value is in $A$ having 
observed $\bl y$ is the same  
as the  belief that the correct $\theta$-value is in $\bar gA$ having  
observed $g\bl y$. 

\subsection{Posterior approximation via the Gibbs sampler}
The results in the previous subsection hold 
as long as $\bl s$ is \emph{a priori} independent of $\sigma$ and $\bl U$
and the prior for $\bl s$ is proper.  The remainder of the 
article focuses 
attention 
on normal priors for $\bl s$, so that 
the joint prior distribution of $(\sigma, \bl U, \bl s)$ 
has  
a density of the form 
$\pi(\sigma, \bl U , \bl s)  =  \pi_I(\sigma , \bl U) \times 
    \pi_s(\bl s)$,
where $\pi_I$ is density of the invariant prior discussed previously  and
$\pi_s$ is  a zero-mean multivariate normal prior  with covariance
matrix $\bs \Psi$.
Not only are such priors for $\bl s$ computationally convenient, 
but they lead to an interpretation of the model as a multiway extension to a 
normal factor analysis model, as will be discussed in the next section. 

Posterior inference  under such a prior can be made via a reasonably 
straightforward Gibbs sampling algorithm that 
approximates the posterior distribution of 
$( \sigma^2 ,   \bl U ,  \bl s )$ 
given $\bl y$. 
The algorithm proceeds by iteratively updating the values of 
these parameters as follows: 
\begin{enumerate}
\item Simulate $(\sigma^2,\bl s) $ from $\pi(\sigma^2, \bl s| \bl y, \bl U  )$ as follows: 
\begin{enumerate}
\item simulate $\sigma^2$ from $\pi(\sigma^2| \bl y, \bl U  )$, an inverse-Gamma distribution; 
\item simulate $\bl s $ from $\pi(\bl s | \bl y, \bl U ,\sigma^2)$, 
a multivariate normal distribution. 
\end{enumerate}
\item For $k\in \{1, \ldots, K\}$, 
 simulate 
   $\bl U_k $ from $\pi(\bl U_k| \bl y, \bl s , \{ \bl U_j:j\neq k\} ,\sigma^2 )$, a von Mises-Fisher distribution on $\mathcal V_{r_k,n_k}$.  
\end{enumerate} 
Repeated iteration of the above procedure generates a Markov 
chain whose stationary distribution is the posterior distribution of
$( \sigma^2, \bl U, \bl s )$
given $\bl y$.

\paragraph{Full conditional distribution of $(\sigma^2,\bl s)$:}  
Recall that the model for $\bl y$ is 
$\bl y= \sigma \bl U \bl s + \sigma \bl e, \ \bl e \sim N_n(\bs 0,\bl I)$, 
where $n=\prod n_k$. 
The normal prior $\bl s\sim N_{r}(\bl 0, \bs\Psi)$ 
implies that, unconditionally on $\bl s$, $\bl y$ is multivariate 
normal with mean $\bl 0$ and covariance matrix 
\begin{align*} 
\Exp{ \bl y \bl y^T | \bl U, \sigma } & = 
  \sigma^2   \Exp{  \bl U \bl s \bl s^T  \bl U^T  + \bl e\bl e^T  + 
    2 \bl U \bl s \bl e^T}  \\
  &= \sigma^2 \left ( \bl U \bs \Psi \bl U^T + \bl I \right). 
\end{align*}
Based on this result, standard calculations show that 
the conditional distribution of $\sigma^2$ 
used in step 1  of the above algorithm is an  
inverse-gamma distribution:
\begin{align*} 
1/\sigma^2 & \sim \text{gamma}(  n /2 , 
        \bl y^T (  \bl U \bs \Psi \bl U^T  + \bl I )^{-1}  \bl y/2   ). 
\end{align*} 
Now given $\sigma$ and $\bl U$, 
the model can be expressed as 
$\bl y/\sigma = \bl U \bl s + \bl e$
where the entries of $\bl e$ are i.i.d.\ standard normal random variables.
This has the same form as
a regression model with $\bl s$ playing the role of
the vector of  unknown regression coefficients.
Combining this ``regression likelihood'' with the normal prior
 $\bl s \sim N_{r} (\bl 0 , 
   \bs \Psi )$
gives a normal full conditional distribution for $\bl s$ with
mean and variance given as follows:
\begin{align*}
\Var{ \bl s| \bl y, \bl U, \sigma^2, \bs \Psi }  &= \tilde {\bs \Psi} = 
  ( \bs \Psi^{-1}  +  \bl I )^{-1}  \\
\Exp{\bl s| \bl y, \bl U, \sigma^2, \bs \Psi }  &= 
    \tilde {\bs \Psi }  \bl U^T \bl y/\sigma. 
\end{align*}
The next section discusses
specification and 
 estimation 
of $\bs \Psi$, 
and its relationship to 
the mode-specific singular values of 
the  mean array
$\bl M$.

\paragraph{Full conditional distribution of $\bl U$:}
Let $\bl Y_{(1)}$, $\bl S_{(1)}$ and  $\bl E_{(1)}$ be the
mode-1 matricizations of the arrays $\bl Y$, $\bl S$ and $\bl E$
respectively.  The model can then be written as
$\bl Y_{(1)}/\sigma = \bl U_1  \bl S_{(1)} \bl U_{-1}^T  + \bl E_{(1)}$
where $\bl U_{-1} = ( \bl U_K \otimes  \cdots\otimes  \bl U_2)$ 
and the elements of $\bl E_{(1)}$ are i.i.d.\ standard normal random 
variables. 
Since the prior for $\bl U_1$ is the uniform distribution on $\mathcal V_{r_1,m_1}$, its full conditional distribution is proportional to
the density of $\bl Y_{(1)}$:
\begin{align*}
\pi(\bl U_1 | \ldots)   \propto_{\bl U_1}   p(\bl Y_{(1)} |\bl S, \bl U , \sigma^2_e)  &  \propto_{\bl U_1}  \exp( -\tfrac{1}{2}  ||\bl Y_{(1)}/\sigma - \bl U_1\bl S_{(1)} \bl U_{-1}^T    ||^2   )  \\
&\propto_{\bl U_1}  \etr( \bl U_1^T \bl Y_{(1)} \bl U_{-1} \bl S_{(1)}^T  )/\sigma ) 
\equiv  \etr( \bl U_1^T \bl H  ) 
\end{align*}
where $\bl H = \bl Y_{(1)} \bl U_{-1} \bl S_{(1)}^T /\sigma$.
This is proportional to the matrix-variate
von Mises-Fisher distribution
vMF$(\bl H)$
on $\mathcal V_{r_1,m_1}$.  An algorithm for direct simulation from
 vMF$(\bl H)$ is described in \citet{hoff_2009a}.
The full conditional distributions of $\bl U_2,\ldots, \bl U_K$ 
can be derived analogously.


\section{Estimation of \texorpdfstring{$\bs \Psi$}{Psi} }
The covariance matrix $\bl \Psi$ of the core array $\bl S$ 
can be viewed as a description of the 
scale of $\bl M$ relative to the scale $\sigma$ of the error, 
or alternatively, as a penalty on the magnitude of $\bl S$ 
that serves to provide a regularized estimator of the mean 
array $\bl M = \sigma \bl S \times \bl U$. In practice, 
an appropriate value of $\bl \Psi$ may not be known 
in advance, and therefore must be estimated from the data. 
This section discusses estimation of $\bl \Psi$ in the context of 
two models for $\bl S$. The first of these is 
simply that $\text{vec}(\bl S) = \bl s \sim N_r(\bl 0, \tau^2 \bl I)$, 
where $\tau^2$ is a 
scale parameter to be estimated. 
In a simulation study, it is shown  that
this  model provides better estimates of $\bl M$ than those obtained by 
minimizing the residual sum of squares. 
However, this simple covariance model shrinks all values of $\bl S$ 
equally, and does not recognize the array structure of $\bl S$. As 
an alternative to this homoscedastic 
i.i.d.\ model, a heteroscedastic 
separable variance
model is developed,  of the form 
$\Cov{\bl s} = \tau^2 \bs\Lambda_K \otimes \cdots \otimes  
  \bs \Lambda_1$, 
where each $\bs \Lambda_k $ is a diagonal matrix with positive 
entries that sum to 1.
Such a model allows for separate penalization of the 
mode-specific eigenvalues of the array $\bl M$. 
Such penalization is useful when it is feared that 
the fitted rank $\bl r$ is larger than the actual rank of the 
mean array for some of the modes.  
In such cases, it is desirable to have  a procedure that can shrink the 
estimate of $\bl M$ towards arrays with lower 
mode-specific ranks. 
This section first derives
 this heteroscedastic model 
and provides some 
interpretation of the parameters,  and then illustrates in a simulation 
study 
how  estimators based on this model 
can shrink towards low-rank solutions when the fitted rank 
is too large. 

\subsection{Derivation and interpretation of the heteroscedastic model}
Even if $\bl s$ were observed, unrestricted estimation of $\bl \Psi$ 
based on the model $\bl s\sim N_r(\bl 0, \bl \Psi)$ 
would be problematic, as $\bl s$ corresponds to  only 
a single realization from the $N_r(\bl 0, \bl \Psi)$ distribution. 
Instead, consider  first estimation of $\bs \Psi$ restricted 
to the class of separable covariance matrices, so 
that $\bs \Psi  = \bs \Psi_K \otimes \cdots \otimes \bs \Psi_1$, where 
each $\bs \Psi_k$ is an $r_k\times r_k$ positive definite matrix.  
Now recall that marginally over $\bl s$, the distribution for 
$\bl y=\text{vec}(\bl Y)$ is a mean-zero $n$-variate normal distribution 
with covariance matrix proportional to $\bl U \bs \Psi \bl U^T +\bl I$. As $\bl U$ and $\bs \Psi$ 
are both separable,  it follows that
\[  \Cov{\bl y| \bl \sigma , \bl U , \bs\Psi }/\sigma^2 =
    \bl U\bs\Psi \bl U^T + \bl I   = 
    (\bl U_K\bs\Psi_K \bl U_K^T \otimes \cdots \otimes 
                 \bl U_1\bs\Psi_1 \bl U_1^T ) + \bl I .  
\]
This covariance model is not identifiable unless restrictions 
are placed on the $\bs \Psi_k$'s. First, the eigenvectors of 
each $\bs \Psi_k$ are not identifiable: If 
$\bl \Psi_k =  \bl V_k \bs\Lambda_k \bl V_k^T$ is the 
eigendecomposition of $\bl \Psi_k$, then 
$ \bl U_k\bs\Psi_k \bl U_k^T = 
    \tilde {\bl U}_k  \bs \Lambda_k \tilde {\bl U_k}^T$, 
where $\tilde {\bl U}_k = \bl U_k \bl V_k^T \in \mathcal V_{r_k,n_k}$. 
Second, the scales of the $\bl \Psi_k$'s are not separately identifiable: 
For example, replacement of $(\bl \Psi_{k_1}, \bl \Psi_{k_2})$ with 
   $(c \bs\Psi_{k_1} , \bl \Psi_{k_2}/c) $  does not change the covariance matrix.  
With this in mind,   
$\bs \Psi$ is  parameterized as
$\bl \Psi  =  \tau^2 ( \bs\Lambda _K \otimes \cdots\otimes  \bs\Lambda_1)$
where $\tau^2>0$ and for each $k$,  $\bs \Lambda_k$  is an 
$r_k\times r_k$ diagonal matrix of positive entries that 
sum to 1.  

The parameters $\bs \Lambda_1,\ldots, \bs \Lambda_K$
can be interpreted in terms of the prior or penalty they induce over
the mode-specific eigenvalues
of the mean array $\bl M = \sigma \bl S \times \bl U$.
These eigenvalues
are often of interest in multiway data analysis
as they describe the extent to which the variation along a mode
can be attributed to a small set of orthogonal factors.   
To relate these eigenvalues to
the $\bs \Lambda_k$'s, recall that
$\bl M_{(1)} = \sigma \bl  U_1 \bl S_{(1)} \bl U_{(-1)}^T $, and so
$\bl M_{(1)} \bl M_{(1)}^T = \sigma^2 \bl U_1 \bl S_{(1)} \bl S_{(1)}^T 
    \bl U_{1}^T. $
Now $\bl S_{(1)}$ is equal  in distribution to
$\tau\bs \Lambda_1^{1/2} \bl Z \bs \Lambda_{-1}^{1/2}$, where
 $\bs\Lambda_{-1} = \bs \Lambda_K \otimes \cdots  \otimes 
   \bs \Lambda_2$ and $\bl Z$ is an $r_1\times r_{-1}$
  matrix of independent  standard normal entries. This gives
\begin{align*}
\Exp{ \bl M_{(1)} \bl M_{(1)}^T} &= 
\sigma^2  \tau^2 \bl U_1 \bs \Lambda_1^{1/2} 
 \Exp{ \bl Z \bs\Lambda_{-1}  \bl Z^T  } 
\bs\Lambda_1^{1/2}  \bl U_1^T  \nonumber \\ 
&= \sigma^2 \tau^2 \bl U_1  \bs \Lambda_1^{1/2} ( \text{tr}( \bs \Lambda_{-1} ) \bl I )  \bs \Lambda_1^{1/2}  \bl U_1^T \nonumber \\
&=   \sigma^2 \tau^2 \bl U_1  \bs \Lambda_1 \bl U_1^T ,  
\end{align*}
where the last calculation follows because the sum of the entries of each $\bs\Lambda_k$ is 1, making
 $\text{tr}(\bs\Lambda_{-1}) =1$.
Based on this calculation
for $\bl M_{(1)}$ (and analogous calculations for
  the other $\bl M_{(k)}$'s),
$\tau^2$  is seen to  be the expected squared magnitude of the
mean array $\bl M$ relative to the error variance $\sigma^2$, and  each
 $\bs \Lambda_k$ is the (scaled) diagonal eigenvalue  matrix 
of $\Exp{ \bl M_{(k)} \bl M_{(k)}^T}$. Additionally, if one or more 
of the diagonal elements of $\bs \Lambda_k$ are very close to zero, then 
$\bl M_{(k)}$ will be very close to a matrix of rank less than $r_k$.

An additional way to interpret the $\bs\Lambda_k$ parameters is  
in terms of a version of factor analysis for array-valued data. 
Under the heteroscedastic model for $\bl s$, the  marginal covariance of $\bl y$
takes the form of a convex combination of a reduced-rank positive semidefinite matrix
$ \bl U \bs \Lambda \bl U^T$
and the full-rank
matrix $\bl I$. 
This is similar to a factor analysis  model  
in which the covariance matrix
is equal to a reduced rank matrix, representing covariance due to
latent factors, plus
a full rank diagonal matrix representing measurement error. 
The fact that $\bl U$ and $\bs\Lambda$ are  separable
allows the factor analysis analogy to be applied to the modes of 
the array $\bl Y$ individually. For example,
considering  the expected 
sum of squares along the first mode
$\Exp{ \bl Y_{(1)} \bl Y_{(1)}^T }$, 
straightforward calculations show that 
\begin{align*}
\Exp{ \bl Y_{(1)} \bl Y_{(1)}^T }  &=  
 \sigma^2  (   
   \tau^2 \bl U_1 \bs\Lambda_1 \bl U_1^T  
   +  n_2\cdots n_K \bl I). 
\end{align*}
As with the covariance of $\bl y$,
this expectation of the  mode-1 
sum-of-squares matrix $ \bl Y_{(1)} \bl Y_{(1)}^T$
takes the form of a convex combination of a positive semidefinite matrix
$\bl U_1 \bs \Lambda_1 \bl U_1^T$ of reduced rank $r_1\leq n_1$
with eigenvalues $\bs \Lambda_1$
and a full-rank
diagonal matrix, as would be the case
in an ordinary factor analysis model that treated the rows of $\bl Y_{(1)}$
as variables and the columns as observations. 
One  difference between
ordinary factor analysis and this model is that the former presumes
independence along the columns of $\bl Y_{(1)}$, whereas this model
allows for dependence along each mode of $\bl Y$.
Another difference is that factor analysis permits  
a non-identity diagonal matrix in place of $\bl I$. 

\subsection{Simulation Study} 
A natural estimator of the reduced-rank mean array $\bl M$ based on 
the data array $\bl Y$ is the minimizer of the residual 
sum of squares $|| \bl Y - \bl M||^2$. 
If $K>2$
the least-squares estimator of $\bl M$ is not available in closed 
form, and so standard practice is to obtain a local minimizer  
$\hat {\bl M}_{\rm ALS}$ 
via an alternating least-squares (ALS) algorithm. 
The algorithm 
minimizes the sum of squares 
iteratively in the mode-specific eigenvectors of $\bl M$, 
a process that has been called
``higher order 
orthogonal iteration'' (HOOI) 
\citep{delathauwer_demoor_vandewalle_2000}.

One might anticipate that  estimates of the mean array $\bl M$ based on 
the homoscedastic model for $\bl S$, in which 
$\bl s\sim N_r(\bl 0, \tau^2 \bl I)$, 
will outperform $\hat {\bl M}_{\rm ALS}$
due to the ability of the former to 
shrink the values of $\bl S$ and the tendency of 
least-squares estimators
to  overfit, 
particularly for large values of $\bl r$.  
It might be further 
anticipated that the heteroscedastic covariance model for $\bl S$,
 in which
$\bl s\sim N_r(\bl 0, \tau^2(\bs \Lambda_K \otimes \cdots \otimes \bs \Lambda_1))$,
will  
outperform the homoscedastic model 
when $\bl r$ is chosen to be too large, 
as the heteroscedastic model allows for mode-specific shrinkage of 
the mean array towards estimates of lower rank. 
However,  such desirable performance in the case of a 
 misspecified rank
may come at the 
expense of poorer performance when the rank is correctly specified. 

These possibilities  were investigated with a simulation study comparing 
three different estimators of the mean array $\bl M$:
\begin{enumerate}
\item $\hat {\bl M}_{\rm ALS}$, 
obtained with 
the ALS algorithm; 
\item $\hat {\bl M}_{\rm HOM}$, the posterior mean  under the 
 homoscedastic model $\bl s \sim N_r(\bl 0, \tau^2 \bl I)$; 
\item $\hat {\bl M}_{\rm HET}$, the posterior mean  under
the heteroscedastic  model $\bl s \sim N_r(\bl 0, \tau^2 \bs \Lambda_K\otimes \cdots \otimes 
  \bs \Lambda_1 ).$ 
\end{enumerate}
The Bayes estimator $\hat {\bl M}_{\rm HOM}$ was obtained 
using a conjugate inverse-gamma$(\nu_0/2,\tau_0^2/2)$ prior for $\tau^2$, 
where $\nu_0=1$ and $\tau_0^2 =\prod_{k=1}^K n_k/r_k$. This value of $\tau_0^2$ 
makes the expected prior magnitude of the mean array 
equal to that of the error, so that 
$\Exp{ ||\bl M||^2 }  = \Exp{ ||\bl E||^2 } $ \emph{a priori}. 
The Bayes estimator 
 $\hat {\bl M}_{\rm HET}$ was obtained under 
a prior on $(\tau^2, \bs \Lambda_1,\ldots, \bs \Lambda_K)$ 
in which  $\tau^2$ has an 
inverse-gamma$(1/2,\tau_0^2/2)$ distribution  
and the diagonal elements of each 
$\bs\Lambda_k$ are uniform on the $r_k$-dimensional simplex.  
The value of $\tau^2_0=\prod_{k=1}^K n_k$ was chosen so that 
 $\Exp{ ||\bl M||^2 }  = \Exp{ ||\bl E||^2 } $ {\it a priori}, as with 
the prior  used to obtain
$\hat {\bl M}_{\rm HOM}$. 
The uniform priors on the $\bs\Lambda_k$'s are not conjugate, 
and so the Markov chain for posterior estimation in this model
relies on a Metropolis-Hastings update for these parameters. 

Three-dimensional data arrays $\bl Y \in \mathbb R^{60\times 50\times 40}$
were simulated 
according to the following procedure:
For a given rank vector
$\bl r_0=(r_{01},r_{02},r_{03})$,  
\begin{enumerate}
\item simulate $\bl U_k \sim \text{uniform}(\mathcal V_{r_{0k},n_k} ) $
for each $k\in \{1,2,3\}$;
\item simulate  $\bl s \sim N_r(\bl 0, \psi\times \left ( \prod_{k=1}^{K} r_{0k}^2\right )^{-1/3}\times \bl I)$;
\item let $\bl M =
  \bl S \times \{ \bl U_1,\ldots, \bl U_K \} $, where   
  $\text{vec}(\bl S) = \bl s$; 
\item let $\bl Y =  \bl M + \bl E $,
where $\bl E$ has i.i.d.\ standard normal entries.
\end{enumerate}
Data were generated under two values of $\bl r_0$ and
two values of $\psi$ for a total of four different conditions.
The
values of $\bl r_0$ included 
a ``low-rank'' condition $\bl r_0=(6,5,4)$ 
and 
a ``high-rank'' condition $\bl r_0=(30,25,20)$, 
and the 
values of $\psi$ included  a  
``low-signal'' condition $\psi=1000$  and a ``high-signal'' 
condition $\psi=2000$. 
Ten datasets were generated under each of these four conditions, 
for a total of forty simulated datasets. 
For each dataset, 
$\hat {\bl M}_{\rm ALS}$, $\hat {\bl M}_{\rm HOM}$ 
and $\hat {\bl M}_{\rm HET}$ were obtained
with the assumed rank $\bl r$ equal to the
true rank $\bl r_0$. 
Each Bayesian estimate was obtained via 11,000 iterations of the 
MCMC
algorithm described in the previous section.
The first 1000 iterations of each Markov chain were dropped to 
allow for convergence to the stationary distribution, and parameter 
values were saved every 10th iteration thereafter, resulting 
in 1000 simulated values of $\bl M$ with which to approximate its 
posterior mean. Convergence and mixing of the 
Markov chains were monitored via traceplots of 
the simulated values of 
$\sigma^2$ and 
$\tau^2$,  as well as their effective sample sizes, which
roughly measure the approximation variability 
of the posterior mean estimates relative to those that would be obtained 
from independent Monte Carlo simulations.
Effective sample sizes for $\sigma^2$ and $\tau^2$ were above 300
for all scenarios and datasets, 
and close to  half  the Markov chains attained the maximum possible value of 
1000.

\begin{table}
\begin{center}
\begin{tabular}{||r||c|c||c|c||} \hline
 rank &\multicolumn{2}{c||}{$\bl r_0=(6,5,4)$}&\multicolumn{2}{c||}{$\bl r_0=(30,25,20)$}\\
  \hline
signal  &  low  & high  & low  & high  \\ \hline  \hline
RSE$( \hat {\bl M}_{\rm ALS} )$ &  0.195 & 0.088 & 0.848 & 0.379 \\
RSE$( \hat {\bl M}_{\rm HOM} )$ &  0.165 & 0.082 & 0.485 & 0.280 \\
RSE$( \hat {\bl M}_{\rm HET} )$ & 0.165& 0.082 & 0.489 & 0.281 \\ \hline
\end{tabular} 
\end{center}
\caption{Relative squared estimation errors.}
\label{tab:rse}
\end{table}

For each estimator 
and each simulation condition,  a
relative squared estimation error (RSE) 
was computed
by averaging 
the value of 
$|| \bl M -  \hat{\bl M}||^2/||\bl M||^2$ 
across the 10 datasets. 
These values are given in Table \ref{tab:rse}. 
Note that $\hat {\bl M}_{\rm HOM}$ is to some extent 
an ``oracle'' estimator, in that it is based on a prior 
distribution that was used to simulate the data 
(although $\hat {\bl M}_{\rm HOM}$ requires estimation of $\tau^2$). 
Nevertheless, 
in the low-rank case $(\bl r_0=(6,5,4))$, the two Bayes 
estimators performed nearly identically in terms of RSE, and 
the ALS estimator performed slightly worse. In terms of variability across 
datasets, $\hat {\bl M}_{\rm HOM}$ outperformed 
  $\hat {\bl M}_{\rm ALS}$  for all datasets,  
and outperformed $\hat {\bl M}_{\rm HET}$ in  10 of the 20 datasets.  
The story is similar for the 20 high-rank datasets $(\bl r_0=(30,25,20))$, 
except that  ALS performs 
more poorly in this case than in the low-rank case, presumably 
because of the much larger number of parameters and the 
general tendency 
of least-squares estimators to overfit the data. Regarding this, 
the residual squared error  $||\bl Y-\hat {\bl M}||^2$ 
was lower for the ALS estimator
than the Bayes estimators
across all datasets and scenarios. 

For the same 40 simulated datasets, 
estimates $\hat {\bl M}_{\rm ALS}$,  $\hat {\bl M}_{\rm HOM}$
and  $\hat {\bl M}_{\rm HET}$ 
were also obtained
using a fitted rank of 
$\bl r =2\times \bl r_0$, that is, twice 
the actual rank of $\bl M$. Note that in the 
high-rank scenario the fitted rank is $\bl r=(60,50,40)$, which is the dimension of the data array. In this case, the  estimates 
are of full rank and so in particular the ALS estimate  is 
simply $\bl Y$. 
Also, the Bayes estimates in this full rank case were obtained using a
proper gamma$(1/2,1/2)$ prior distribution for 
$\sigma^2$ to guarantee the propriety of the posterior 
(recall the discussion in Section 2). 
Relative squared errors (RSEs) for these misspecified-rank
estimators 
 are given in Table \ref{tab:rse_of}. 
Not surprisingly, $\hat {\bl M}_{\rm ALS}$ performs poorly across  
all scenarios, and roughly 4 to 6 times worse than it does
when the rank is correctly specified. 
The Bayes estimator  $\hat {\bl M}_{\rm HOM}$ 
performs reasonably well in the low-rank scenario, 
but roughly 3 times worse than 
it does in the high-rank scenario with correctly specified rank. 
In contrast, the performance of $\hat {\bl M}_{\rm HET}$  
with  a misspecified rank is nearly identical to its performance 
with  a correctly specified rank. 
This suggests that the 
heteroscedastic model for $\bl S$ 
is able to shrink the estimate of $\bl M$ towards arrays of the correct rank.

\begin{table}
\begin{center}
\begin{tabular}{||r||c|c||c|c||} \hline
 rank &\multicolumn{2}{c||}{$\bl r_0=(6,5,4)$}&\multicolumn{2}{c||}{$\bl r_0=(30,25,20)$}\\
  \hline
signal  &  low  & high  & low  & high  \\ \hline  \hline
RSE$( \hat {\bl M}_{\rm ALS} )$ &  0.855 & 0.404 & 4.840 & 2.420 \\
RSE$( \hat {\bl M}_{\rm HOM} )$ &  0.260 & 0.141 & 1.364  & 0.840 \\
RSE$( \hat {\bl M}_{\rm HET} )$ & 0.166 & 0.082 & 0.495 & 0.284  \\ \hline
\end{tabular}
\end{center}
\caption{Relative squared estimation errors 
 when the fitted rank is twice that 
of $\bl r_0$.}
\label{tab:rse_of}
\end{table}

\begin{figure}[ht]
\centerline{\includegraphics[width=6.5in]{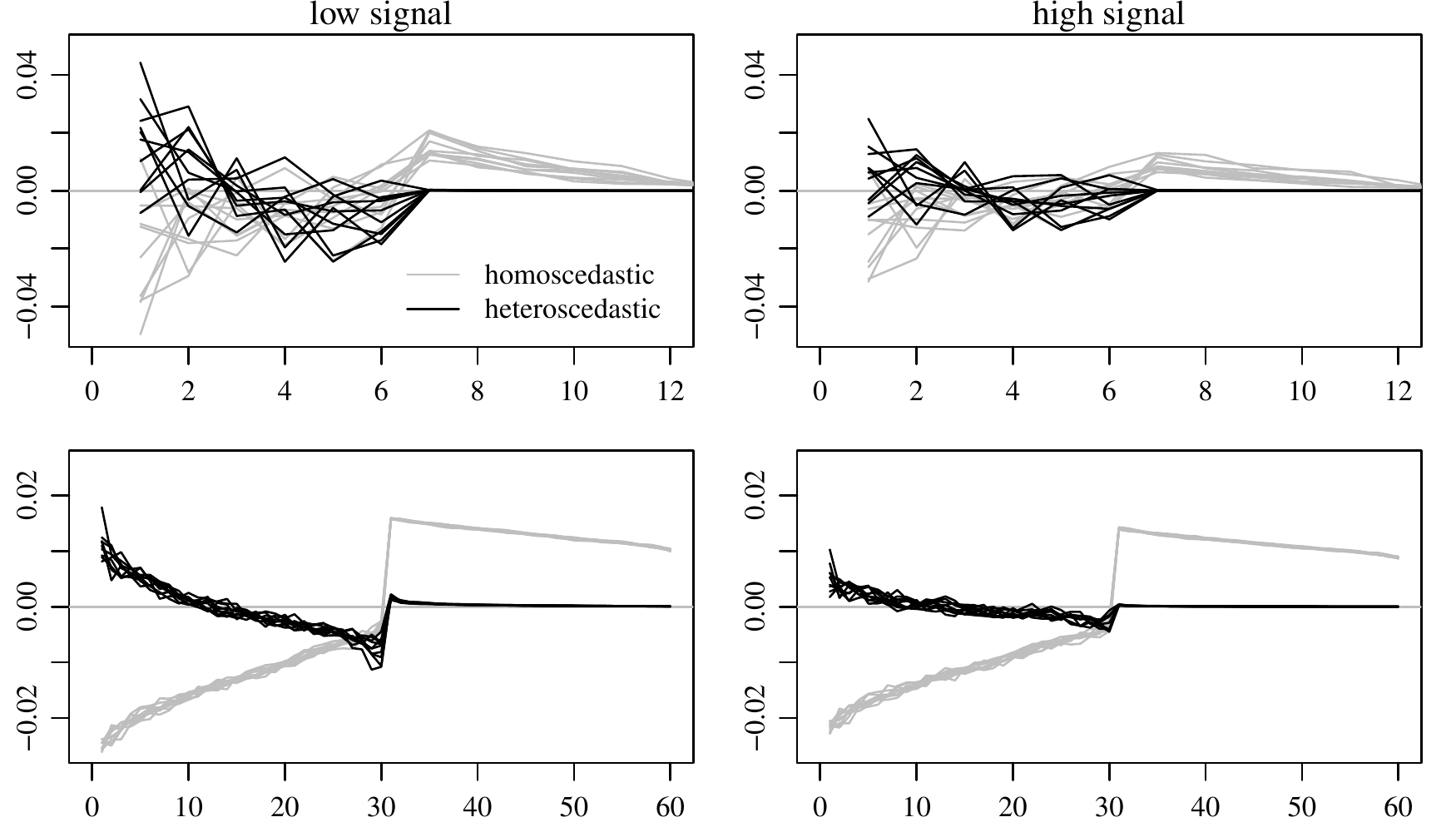}} 
\caption{Difference of eigenvalues between 
$\hat {\bl  M}_{(1)}\hat{\bl  M}_{(1)}^T$   and 
   ${\bl  M}_{(1)}{\bl  M}_{(1)}^T$. 
Estimates in the first row are based on 
a true rank of $\bl r_0=(6,5,4)$ and  a fitted rank 
of $\bl r=(12,10,8)$.  
Estimates in the second row are based on 
a true rank of $\bl r_0=(30,25,20)$ and  a fitted rank 
of $\bl r=(60,50,40)$. }
\label{fig:espectra}
\end{figure}

This is explored further in Figure \ref{fig:espectra}. 
For each Bayesian estimate $\hat {\bl M}$ obtained with a misspecified 
rank, its mode-1 matricization $\hat {\bl M}_{(1)}$ was constructed
and the normalized eigenvalues of   
$\hat {\bl  M}_{(1)}\hat{\bl  M}_{(1)}^T$ were computed, 
from which the 
normalized
eigenvalues of ${\bl  M}_{(1)}{\bl  M}_{(1)}^T$ were subtracted off, 
where $\bl M_{(1)}$  is 
the mode-1 matricization of the true mean array $\bl M$. 
These eigenvalue differences are plotted across datasets and 
conditions in Figure \ref{fig:espectra}. 
For example, the plot in the upper-left corner of the figure 
shows results under the low-signal low-rank condition, 
for which the true rank is $\bl r=(6,5,4)$ but the fitted rank is
$\bl r=(12,10,8)$. Each black line corresponds to the 
eigenvalues of $\hat {\bl  M}_{(1)}\hat{\bl  M}_{(1)}^T$ 
obtained under the heteroscedastic model
minus the eigenvalues of ${\bl  M}_{(1)}{\bl  M}_{(1)}^T$,
for one of the 10 simulated datasets. 
The gray lines correspond to the analogous differences under the 
homoscedastic model. 
The results indicate that the homoscedastic
model
generally underestimates  non-zero eigenvalues 
and substantially overestimates zero eigenvalues. 
In contrast, the heteroscedastic model generally 
does a very good job of estimating the zero eigenvalues 
as being  very nearly zero.  However, 
for the non-zero eigenvalues, the estimated eigenvalues for the 
the heteroscedastic model  are somewhat too ``steep'', 
overestimating the true 
large non-zero eigenvalues and 
underestimating the small non-zero eigenvalues.  
A larger signal appears to ameliorate these biases, as 
the differences between estimated and true eigenvalues is diminished in 
going from the low-signal to the high-signal scenario. 
However, the presence of such biases suggests 
exploration of more complex adaptive penalties or hierarchical priors,
 i.e.\  ones that could more flexibly adapt to the shape of the eigenspectra 
in the observed data. For example, 
a beta$(a,b)$ prior over the diagonal elements of $\bs\Lambda_k$ 
could be used instead of the uniform prior. 
However, in the absence of prior 
information about the eigenspectra, the values of $a$ and $b$ would 
need to be obtained from the data. 
Such an empirical Bayes approach would be similar in spirit to the 
two-parameter matrix regularizer of 
\citet{josse_sardy_2013}.

\section{A scale-free Tucker decomposition model}
In this section the TDM is extended
in order to analyze 
data arrays 
for which the assumption of normally distributed errors 
is inappropriate. 
The approach presented is based upon a transformation model
in which the observed data array is modeled as  
an unknown increasing function of a latent array that follows 
a normal TDM. The model fitting procedure 
provides parameter estimates that are invariant to 
monotonic transformations of the data array, thereby 
giving a ``scale-free'' TDM.
This approach is  motivated and illustrated with an analysis 
of discrete multivariate data on relations between countries in the 
year 2012. 


\subsection{Data description}
The motivating application of this section is
to obtain a low-rank representation of 
relational data on actions between countries,
obtained from the GDELT project (\href{http://gdelt.utdallas.edu/}{gdelt.utdallas.edu}). The data analyzed 
consist of a weekly summary of
20 different types of actions
between the 30 most active countries in the GDELT database in
2012. These data can be represented as a $30\times 30 \times 20 \times 52$
four-way array $\bl Y$, with entries
$\{ y_{i,j,k,t} : 1 \leq i , j \leq 30, i\neq j, 1\leq k\leq 20, \ 
                    1\leq t \leq 52 \}$
where $y_{i,j,k,t}$ is the number of days in
week $t$ in which
country $i$ took action $k$ with country $j$ as the target. 
The types of actions 
include ``positive'' 
actions such as diplomatic cooperation and the provision of 
aid, as well as ``negative'' actions such as the expression 
of disapproval, military threats and military conflict.  
More details on the action types, as well as a list of the 30 countries in 
the data array, are provided in the Appendix.  
Figure \ref{fig:exnets} provides a graphical summary of the array $\bl Y$
for four of the twenty action types.  To construct this figure, 
counts for each of the four action types 
between each ordered pair of countries
were summed across the 
52 weeks of the year and then dichotomized, so that a link between 
two countries indicates the presence of the action type for 
at least one day of the year. 

\begin{figure}[ht]
\centerline{\includegraphics[width=6.5in]{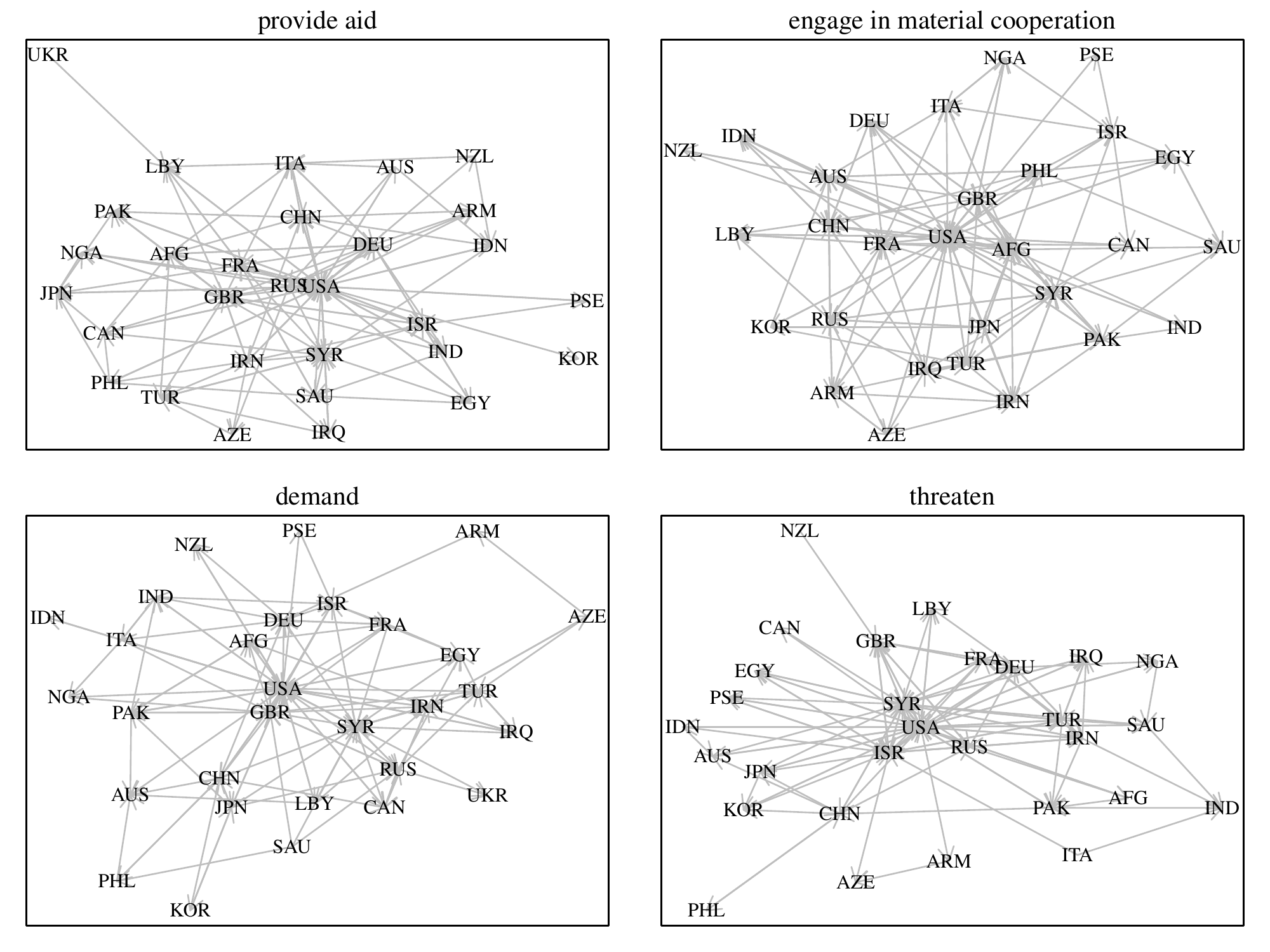}}
\caption{Networks corresponding to four of the twenty action types.}
\label{fig:exnets}
\end{figure}

The  data array $\bl Y$  has nearly one million entries but is very sparse,
with just over 2\% of the entries being non-zero.
This sparsity varies  by action type
from a high of  about 12\% for the action ``consult'' to
a low of less than .01\% for the action ``use unconventional mass violence.''
Sparsity also varies considerably by country: The first panel of 
Figure \ref{fig:dplots} plots outdegrees and indegrees of each country, 
computed (for country $i$) as 
$\sum_{jkt} y_{i,j,k,t}$ and $\sum_{jkt} y_{j,i,k,t}$, respectively. 
These two measures of activity are highly correlated across countries, 
with Syria being somewhat of an outlier, being the target of 
more actions than it initiates. 
Additionally,  
the counts for each 
action are highly skewed: There are more counts of 
zero than counts of one, more counts of one than counts of two, 
and so on.  This is illustrated in the second panel of 
Figure \ref{fig:dplots}, which gives the empirical distribution of 
the nonzero  entries of $\bl Y$. 

\begin{figure}[ht]
\centerline{\includegraphics[width=6in]{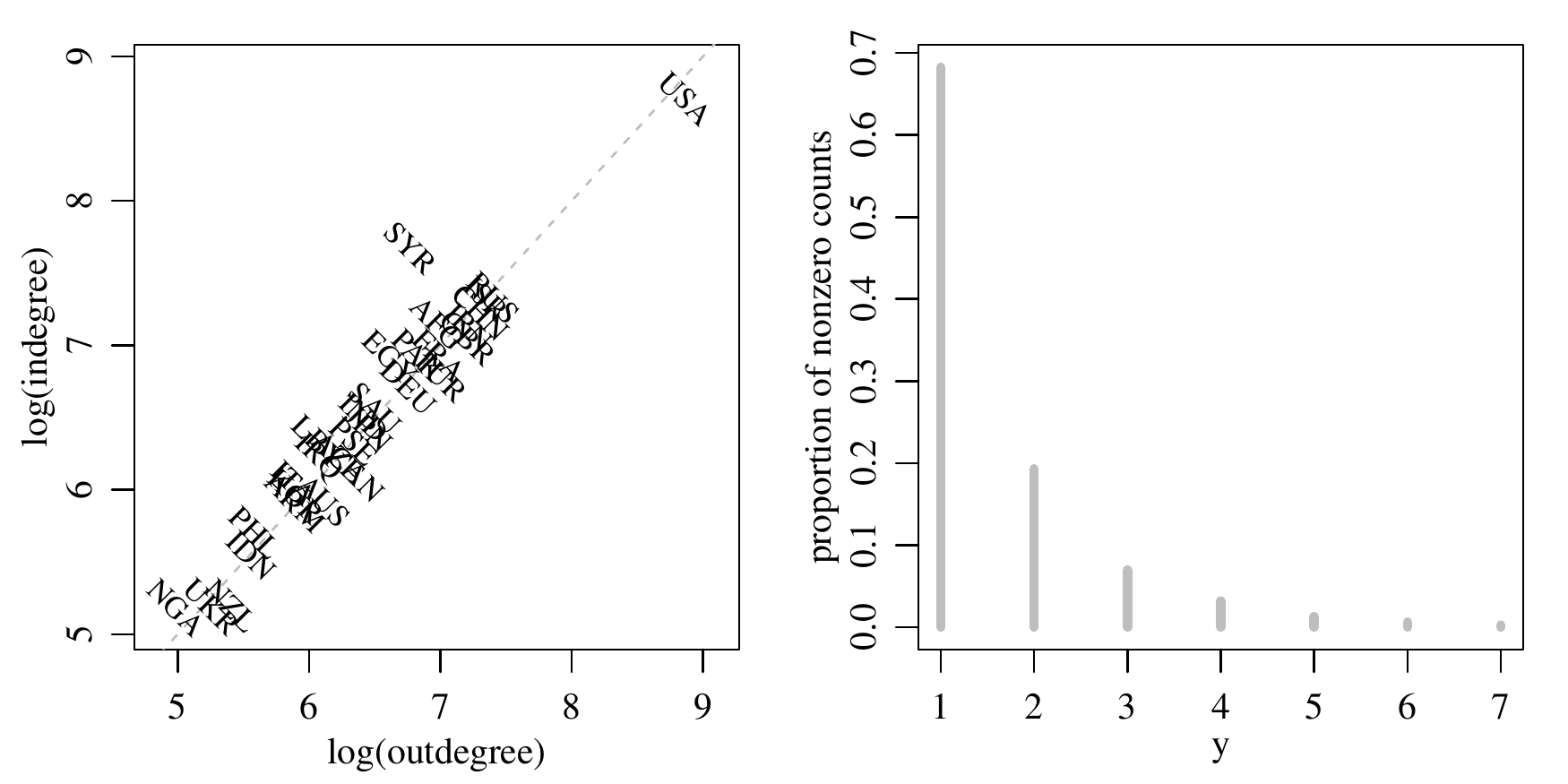}}
\caption{Descriptive data plots. The left panel shows 
country-specific outdegrees and indegrees on the log scale. 
The right panel gives a histogram of the non-zero action counts.}
\label{fig:dplots}
\end{figure}

\subsection{Scale free TDM}
Existing array decomposition 
methods applied directly to these data would
be problematic for several reasons.
One particular issue in applying matrix or array
decomposition methods to relational datasets is that
self-relations are typically undefined, that is,
$y_{i,i,k,t}$ is not defined for any $i$, $k$ or $t$. 
This issue can be addressed via an alternating least-squares 
algorithm that iterates between 
fitting a  reduced-rank model and 
replacing any missing values with 
fitted values
(see, for example, \citet{ward_hoff_2007} for details on 
such an algorithm applied to matrix-valued relational data). 
A more serious problem is that the discrete or ordinal nature of
many relational datasets  makes
least-squares methods of limited use. 
For example, 
as will be illustrated at the end of this section, 
a reduced rank representation of 
the GDELT data array $\bl Y$  
obtained via alternating least squares
generally 
represents the largest data values at the expense of 
other interesting features of  the data. 

While the normal TDM model presented in the previous section may 
not be appropriate
for ordinal or discrete data,
the normal model can be extended to accommodate
such data via a latent variable formulation, in 
which the entries of $\bl Y$
are modeled as a non-decreasing function of the elements
of a latent array $\bl Z$ that follows the Tucker decomposition model. 
If the elements of $\bl Y$ take on a 
known finite number of 
possible values, then 
such an approach can be viewed as similar to an 
ordered probit model. 

In many datasets
one of the indices of the array $\bl Y$ represents
variables that may be best evaluated on different scales. 
For example,  the large heterogeneity in
sparsity between the 20 different action types in the GDELT dataset
suggests modeling the different types  on different scales. 
As another
example,
consider an  $ n\times n \times 2$ relational array  where
$y_{i,j,1} $ is the number of emails sent from person $i$ to
person $j$, 
and $y_{i,j,2}$ encodes an evaluation 
of $i$'s  friendship to $j$ on an 
ordinal scale.
In such a case, it may not make sense to model
$y_{i,j,1}$ and $y_{i,j,2}$ as the same transformation of
the latent variables
$z_{i,j,1}$ and $z_{i,j,2}$. In particular, the number of levels
of the two variables may be different.
For cases such as these, a more appropriate transformation model 
may be one with 
with variable-specific transformations, so that 
\begin{align}
\bl Z & = \bl S \times \{ \bl U_1 , \cdots ,  \bl U_K  \}  + \bl E \\ 
 \text{vec}(\bl E) & \sim  N_n(\bs 0 , \bl I)  \nonumber \\
y_{\bs i,j} & = g_{j}( z_{\bs i,j} ),   \nonumber
\label{eq:ctsvd}
\end{align}
where $\bs i\in \{1,\ldots, n_1\}\times \cdots \times \{ 1,\ldots, n_{K-1}\}$, 
$j\in \{1,\ldots, n_K\}$, and 
for notational convenience  
the variables to be modeled on different scales are indexed by the 
$K$th mode of the array. 
Note that the scale parameter $\sigma$ from the TDM
in the previous sections
would be confounded with the transformations $g_1,\ldots, g_{n_K}$, 
and so can
be set 
to 1.

In the case that the transformations $g_1,\ldots, g_{n_K}$
are nuisance parameters, 
scale-free estimation of $(\bl S , \bl U)$ can be obtained 
using a rank likelihood $L_R$, defined as 
\[   L_R( \bl S , \bl U : \bl  Y ) = 
     \Pr( \bl Z \in R(\bl Y)  | \bl S, \bl U), \]
where  $R(\bl Y)$ is the set of  $\bl Z$-values  
consistent with the observed data $\bl Y$ and the fact 
that the functions $g_1,\ldots, g_{n_K}$ are non-decreasing. 
This set can be expressed as 
\[ 
 R(\bl Y) =  \{ \bl Z : 
 \max \{  z_{\bs i',j} : y_{\bs i',j}<y_{\bs i,j}  \} < 
        z_{\bs i,j} <  \min \{  z_{\bs i',j} : y_{\bs i,j}<y_{\bs i',j} \}\}.
\]
A feature of estimates obtained from the rank likelihood        
is that they are scale-free: 
The set $R(\bl Y)$ is invariant to strictly increasing 
transformations of the data, and therefore so is the rank likelihood.

While maximum likelihood estimation using the rank likelihood is 
generally computationally intractable,
Bayesian inference using this likelihood 
is feasible via the Gibbs sampler
(see 
\citet{hoff_2007a} and 
\citet{hoff_2008b} for applications of the 
rank likelihood to semiparametric copula and regression 
models, respectively).  
Under a prior distribution for $(\bl S, \bl U)$ from the previous 
section, 
posterior estimates for this scale-free  TDM
can be obtained via a simple extension of the previous algorithm.
The extended algorithm can be roughly understood as follows:
If $\bl Z$ were observed, parameter estimates
could be obtained from  the MCMC algorithm for the normal TDM. 
As $\bl Z$ is not observed,  the algorithm requires 
 additional steps in order 
to integrate over the  possible values of $\bl Z$. 
This can be done by simulating
values of the elements of $\bl Z$
from their full conditional distributions
at each step of the Markov chain.
Specifically, 
posterior approximation for this scale-free TDM can proceed
by iterating the following steps:
Given current values $(\bl Z, \bl S, \bl U  )$,
\begin{enumerate}
\item update $(\bl S, \bl U )$
as in the case of the normal TDM,  with $\bl Z$ taking on the role of $\bl Y$;
\item update the elements of $\bl Z$ given $\bl Y$, $\bl S$ and $\bl U$ as follows:
\begin{enumerate}
\item compute $\bl M = \bl S \times \{ \bl U_1,\ldots, \bl U_K\}$;
\item simulate each  $z_{\bs i, j }$ from the
constrained normal$(m_{\bs i, j}, 1)$  distribution, constrained so that 
\[    
\max \{  z_{\bs i',j} : y_{\bs i',j}<y_{\bs i,j}  \} < 
        z_{\bs i,j} <  \min \{  z_{\bs i',j} : y_{\bs i,j}<y_{\bs i',j} \}. 
\]
\end{enumerate}
\end{enumerate}
Iteration of steps 1 and 2 generates a Markov chain, 
samples from which approximate 
the  posterior distribution proportional to 
$L_R(\bl S, \bl U : \bl Y) \times \pi(\bl S, \bl U)$. 
As mentioned above, parameter estimates obtained 
from this posterior distribution are invariant 
to monotonic transformations of each variable along the 
$K$th mode of the array.   
For this reason, this estimation procedure and the resulting estimates 
can be referred to as a 
scale-free Tucker decomposition (SFTD). 


\subsection{Analysis of GDELT data}
A rank $\bl r=(4,4,4,4)$ 
representation of the GDELT data was obtained from the 
SFTD procedure described above, 
using the heteroscedastic prior described in Section 4 and 
modeling the 20 action 
types on different scales. 
A rank of 4 for each mode was chosen because of the substantial 
amount of heterogeneity in the 
outdegrees and indegrees 
as displayed in the first panel of
Figure \ref{fig:dplots}. 
A standard approach to representing such heterogeneity 
would be with an additive model in which the entries of $\bl M$ are 
expressed as  
the sum of mode-specific effects, for example 
$m_{i,j,k,t} =a_i+b_j+c_k+d_t$. 
Such an 
additive effects model has a rank of $(2,2,2,2)$. 
A rank $(4,4,4,4)$ approximation was fit  to $\bl Y$ in order to capture 
the rank $(2,2,2,2)$ additive effects along with
two additional dimensions of 
non-additive data patterns, 
which are shown below. 

The MCMC algorithm described above was run for 55,000 iterations. 
The first 5,000 iterations were dropped to allow for convergence
to the stationary distribution, 
and parameter values were saved every 10th iteration thereafter. 
This resulted in 5,000 simulated values of the parameters with which 
to approximate posterior quantities of interest. 
Mixing of the Markov chain was evaluated with traceplots 
and effective sample sizes 
of $\tau^2$ and the 
eigenvalue parameters $\bl \Lambda_1,\ldots , \bl \Lambda_4$. 
The effective sample size
for $\tau^2$ was 1197.
Effective sample sizes for the eigenvalues ranged between 
371 and 1266, with a mean of 678. 
Traceplots of some of the eigenvalues are 
plotted in Figure \ref{fig:mcmcdiag}. The 
first eigenvalues of the third and fourth modes 
(corresponding to action type and week) are close to one 
with high posterior probability, meaning that $\bl M_{(3)}$ and 
$\bl  M_{(4)}$ are both close to being rank-1 matrices. 
Eigenspectra of the first and second modes 
(corresponding to initiators and targets of the actions)
were more evenly distributed. For both of these two modes, 
the first two eigenvectors predominantly represented 
the heterogeneity in outdegrees and indegrees. To examine 
non-additive patterns in the data, the posterior 
mean array $\hat {\bl M}$ was centered along each index of each mode, creating 
an array  $\tilde {\bl M}$ representing the non-additive patterns in the data. 
\begin{figure}[ht]
\centerline{\includegraphics[width=5.5in]{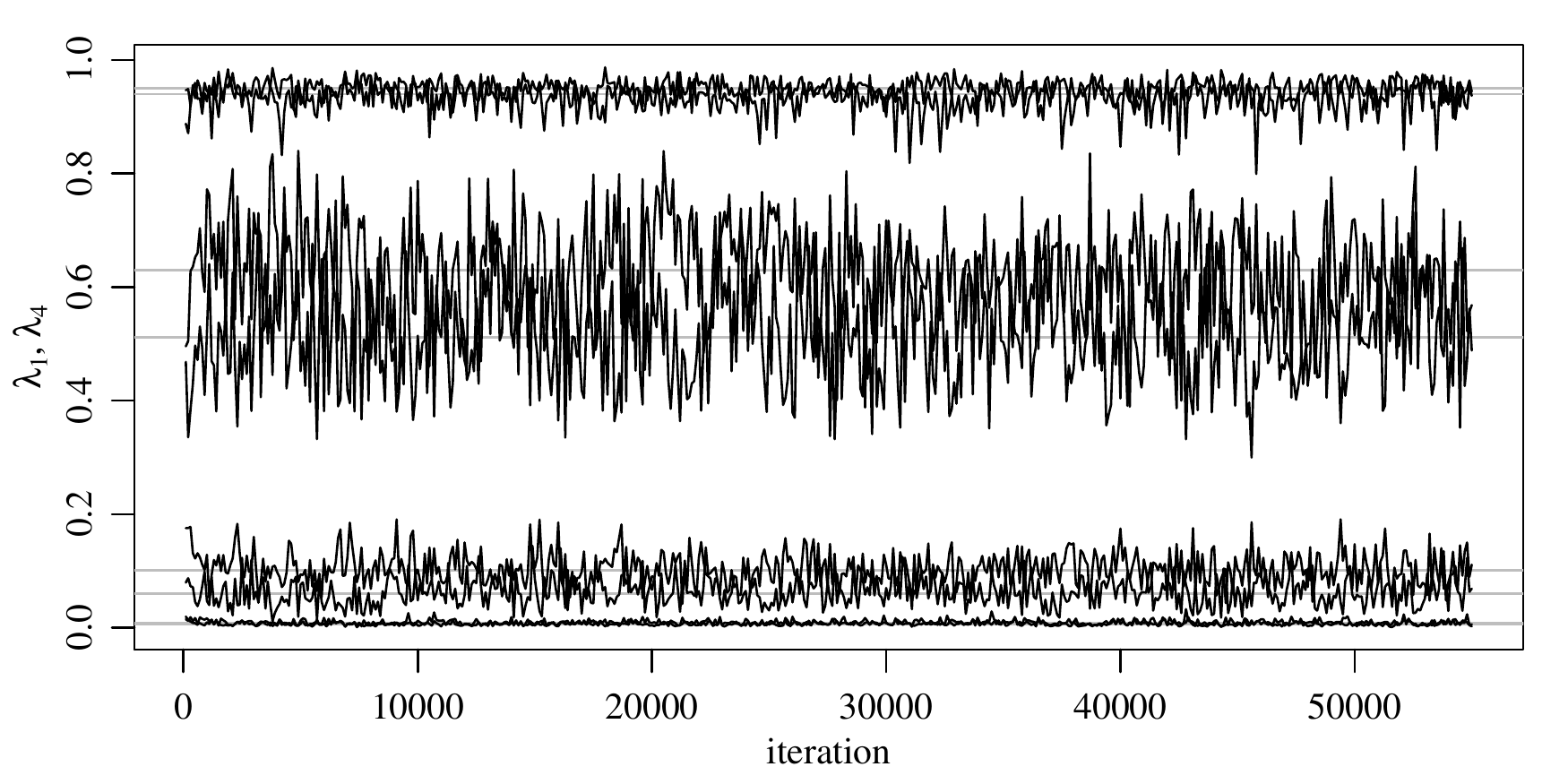}}
\caption{Traceplots of $\lambda_{1,k}$ and $\lambda_{4,k}$ for $k\in
\{1,2,3,4\}$.  }
\label{fig:mcmcdiag}
\end{figure}

The first two left singular vectors of $\tilde {\bl M}_{(1)}$, 
 $\tilde {\bl M}_{(2)}$ and $\tilde {\bl M}_{(3)}$ 
are displayed in Figure \ref{fig:fplot_pm}. 
The first two plots indicate strong geographic patterns
in the first two modes of $\tilde {\bl M}$. 
These patterns indicate that, after accounting 
for additive effects, 
countries that have similar patterns of activity in the dataset
are typically close to one another geographically.  
The converse is not generally true:  
PSE and ISR are far apart from SYR, IRQ and IRN on the plot, 
indicating heterogeneity in the dataset that is non-geographic. 
The third plot in Figure \ref{fig:fplot_pm}
displays the singular vectors of $\tilde {\bl M}_{(3)}$
 corresponding to 
the different action types. Plotting symbols ``+'' 
and ``-'' are used to indicate 
actions that are categorized as ``positive'' or ``negative'' respectively
(a list and categorization  of the action types are given in the appendix). 
The singular vectors of $\tilde {\bl M}_{(3)}$ distinguish somewhat the two types of actions, but there 
is considerable overlap. This is not too surprising, since 
countries that interact frequently with each other generally 
relate both positively and negatively during the course of the year. 

\begin{figure}[ht]
\centerline{\includegraphics[width=6.75in]{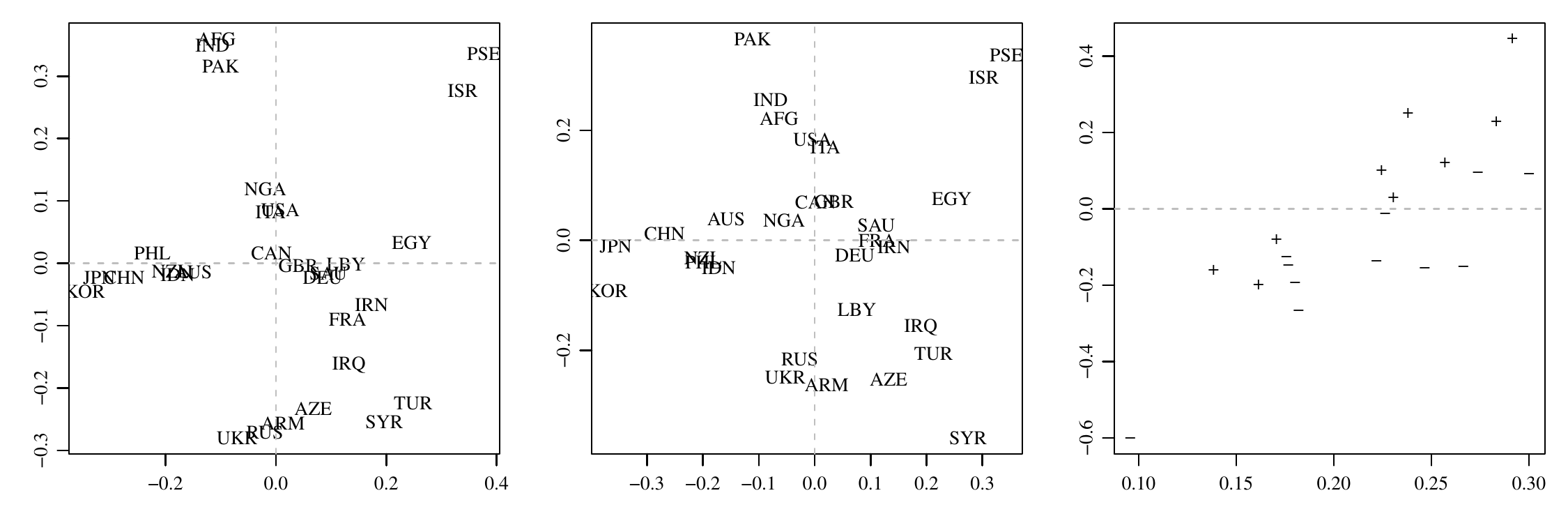}}
\caption{Plots of the first two left singular vectors of 
  $\tilde {\bl M}_{(1)}$,
  $\tilde {\bl M}_{(2)}$ and $\tilde {\bl M}_{(3)}$, 
 from the SFTD of $\bl Y$.}
\label{fig:fplot_pm}
\end{figure}

The utility of the SFTD in comparison to a least-squares 
approach can be seen by contrasting this scale-free representation of $\bl Y$ 
given in 
Figure \ref{fig:fplot_pm} to an analogous least-squares representation 
shown in 
in Figure \ref{fig:fplot_ls}. 
This plot 
gives the first two singular vectors of the first three modes
 of $\tilde {\bl M}_{\rm ALS}$, where $\tilde {\bl M}_{\rm ALS}$ was constructed as 
with the SFTD
except using 
a rank (4,4,4,4) alternating least-squares 
approximation $\hat {\bl M}_{\rm ALS}$ to $\bl Y$ instead of the 
posterior mean array $\hat {\bl M}$. 
The least squares approach is primarily 
identifying the countries that have the most
number of 
data values of 7 (the highest value possible), 
at the expense of representing the other patterns in the data.  
For example, the first singular vectors of the both the first- and 
second-mode matricizations of $\tilde {\bl M}_{\rm ALS}$
are essentially 
devoted to distinguishing the USA from the other countries. 

\begin{figure}[ht]
\centerline{\includegraphics[width=6.75in]{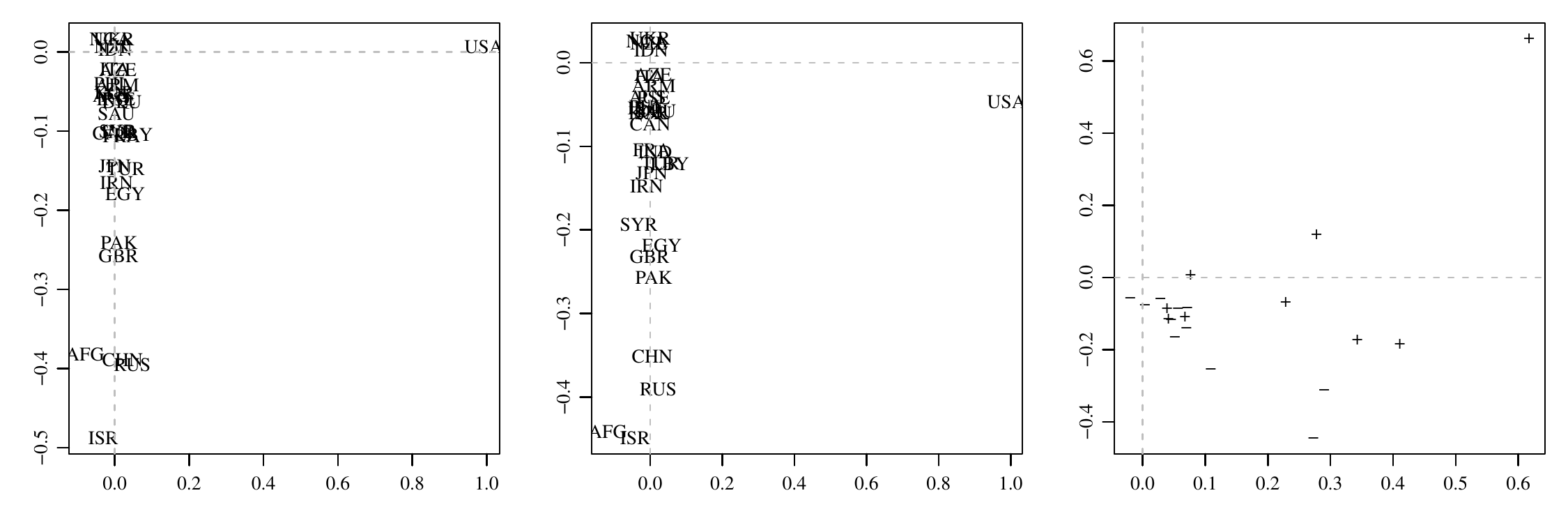}}
\caption{Plots of the first two left singular vectors of     
  $\tilde {\bl M}_{{\rm ALS}(1)}$,
  $\tilde {\bl M}_{{\rm ALS}(2)}$ and $\tilde {\bl M}_{{\rm ALS}(3)}$.  }
\label{fig:fplot_ls}
\end{figure}

The posterior mean array 
$\hat {\bl M}$ and the least squares representation $\hat {\bl M}_{\rm ALS}$ 
can also be evaluated
in terms of how well they represent the rank ordering of the values of $\bl Y$. 
This is done by computing 
Kendall's $\tau$, a scale-free measure of association, 
between the entries of $\bl Y$ and each of the two 
low-rank representations $\hat {\bl M}$ and $\hat {\bl M}_{\rm ALS}$. 
This is done  separately for each of the 20 action types 
in order to evaluate any heterogeneity in performance. 
As shown in Figure  \ref{fig:kdtcomp}, the  SFTD representation
has a higher degree of association with the ranks of $\bl Y$ 
than the least-squares representation for all action types. 
This is perhaps not too surprising - the SFTD is inherently 
scale-free, and so $\hat {\bl M}$ is only representing 
information about the rank ordering of the entries of $\bl Y$. 
In contrast, $\hat {\bl M}_{\rm ALS}$ must also represent differences in 
magnitude. For these highly skewed data, 
a good representation 
of large  differences in magnitude comes at the cost of a poorer representation 
of small differences, which constitute most of the differences in the entries of $\bl Y$.

\begin{figure}[ht]
\centerline{\includegraphics[width=2.75in]{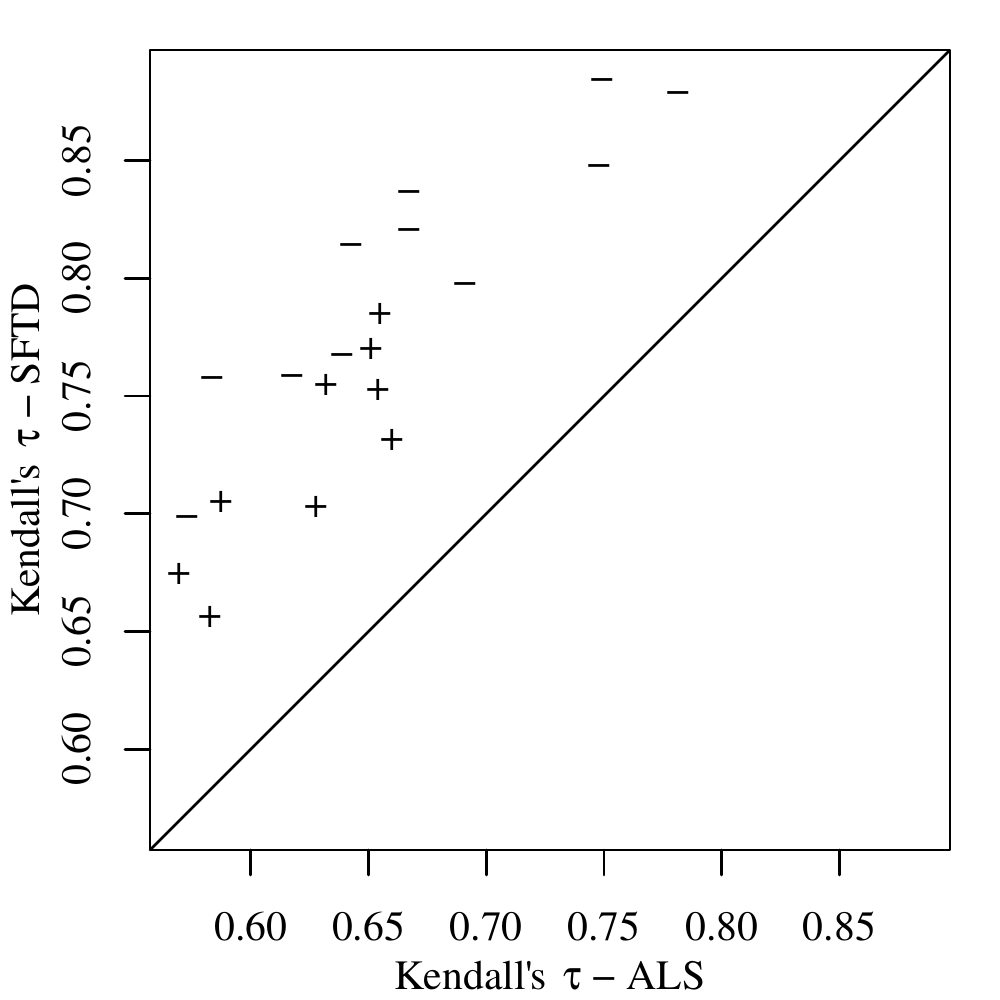}}
\caption{Kendall's $\tau$ measure of association between 
$\bl Y$ and  $\hat {\bl M}$ (vertical axis)  
and $\hat {\bl M}_{\rm ALS}$ (horizontal axis), by action type. 
Positive and negative actions are plotted as ``+'' and ``-''
respectively. }
\label{fig:kdtcomp}
\end{figure}


\section{Discussion}
While the objectives of an 
array-valued data analysis 
may be primarily descriptive, 
model-based approaches may be appealing for a variety of 
reasons. For example, regularized data descriptions 
may be obtained using 
model-based Bayesian procedures, with the prior acting as a penalty term.
This article has  developed 
a parameterization of the normal Tucker decomposition
model that allows for scale-equivariant and  orthogonally-equivariant
estimates and data descriptions, while still allowing for 
penalization of mode-specific singular values. 
Such regularized estimates can greatly improve upon least-squares estimates in 
situations where the data array is equal to a reduced-rank mean 
array plus noise.
Another benefit of the model-based approach is its 
extensibility to a variety of different data types and data analysis 
scenarios.
For example, the semiparametric transformation model 
developed in Section 5 provides a scale-free reduced-rank  
representation for data arrays that consist of discrete, ordinal or 
other types of measurements for which a least squares criterion 
is not appropriate.

A useful extension of the model would be to data analysis 
situations in which it is desired to account for
known explanatory factors or patterns in the data. 
For example, 
one extension of the model used to analyze the 
GDELT data in Section 5 takes the form
\begin{align*}
\bl Z & = \langle \bl X , \bl B \rangle  +  \bl S \times \{ \bl U_1 , \cdots ,  \bl U_K  \}  + \bl E \\ 
 \text{vec}(\bl E) & \sim  N(\bs 0 ,  \bs\Sigma(\rho)  \otimes\bl I \otimes\bl I \otimes\bl I ) 
\end{align*}
where $\bl X$ and $\bl B$ represent arrays of known explanatory variables 
and unknown regression coefficients respectively, and 
$\bs\Sigma(\rho)$ is some simple one-parameter 
model 
that accounts 
for some of the temporal dependence in the data.
In such a model, the reduced rank term 
$\bl S\times \{ \bl U_1 , \cdots ,  \bl U_K  \}$ would
express data patterns not accounted for by
$ \langle \bl X , \bl B \rangle $ or $\bs\Sigma(\rho)$. 
Bayesian inference for parameters in such a model 
could be obtained by adding steps to the 
MCMC  algorithm outlined in this article.

\medskip

Replication code for the results in Sections 4 and 5 
is available at the author's
website:
\href{http://www.stat.washington.edu/~hoff}{\nolinkurl{www.stat.washington.edu/~hoff}}.
This research was supported by NI-CHD grant R01HD067509.

\appendix

\section{Proofs}

\begin{proof}[Proof of Proposition \ref{prop:invdr}]
Suppose a model $\{ p(\bl y|\theta) : \theta\in \Theta\}$ is invariant
under a group $\mathcal G$  that acts properly on $\mathcal Y$ and 
for which
the induced group $\bar {\mathcal G}$ over $\Theta$ is transitive. 
By Theorem 6.5 of \citet{eaton_1989},
a minimum risk equivariant  decision rule
under an invariant loss $L(d,\theta) $ is then given by the minimizer in $d$
of
\begin{equation}
H(d,\bl y) = \int_{\bar {\mathcal G}} 
   L(d,\bar g \theta_0) p(\bl y|\bar g \theta_0) \,  \mu ( d \bar g ) \, , 
\label{eq:eqrisk} 
\end{equation}
where $\theta_0$ is an arbitrary point in $\Theta$ and
 $\mu$ is the right invariant Haar measure on
 $\bar {\mathcal G}$. 
Since
$\bar{\mathcal G}$ is transitive and  the
integrand depends on
$\bar g$ only through $\bar g \theta$,
a change of variables
allows us to
re-express  (\ref{eq:eqrisk}) as
\begin{equation}
H(d,\bl y) = \int_{\Theta} 
   L(d, \theta) p(\bl y| \theta) \,  \pi ( d \theta ) \, ,
\label{eq:bayesrisk}
\end{equation}
where $\pi$ is the measure on $\Theta$ induced by $\mu$ via 
$\pi(A) = \mu ( \{ \bar g : \bar g \theta_0 \in A\} )$. 
As $H(d,\bl y)$ is proportional to the posterior risk under prior
$\pi$, the minimum risk equivariant estimator is equivalent to
the Bayes solution under the (potentially improper) prior $\pi$.

The Tucker decomposition model with known core array $\bl S$ 
is invariant under transformations of the form 
$g_{aW}: \bl y \rightarrow a \bl W \bl y$ and 
$\bar g_{aW} : (\sigma, \bl U) \rightarrow 
   (a \sigma , \bl W \bl U)$ for $a>0$ and $\bl W \in \mathcal W$. 
The set of such transformations $\bar g$ forms a  group $\bar {\mathcal G}$  
with composition as the group action,  
so that $\bar g_{a_1W_1} \bar g_{a_2 W_2} = \bar g_{a_1a_2 W_1W_2}$, 
and note that the elements of the group are uniquely indexed by 
$(a,\bl W) \in \mathbb R^+ \times \mathcal W$.  
Although the group $\mathcal G$ does not act properly  on $\mathbb R^n$, it 
does act properly on $\mathbb R^n \setminus \{ \bs 0\}$, and so 
the above results apply on this reduced sample space that has 
probability one under the model (see \citet[section 6.3]{eaton_1989}). 
It is straightforward to show that the group $\bar{ \mathcal G}$ is transitive over the parameter space:  Given $\theta_1 = (\sigma_1 ,\bl U_1 )$ and  $\theta_2=( \sigma_2 ,\bl U_2  )$, then
$\theta_2 = \bar g \theta_1 $ for  the $\bar g$  given by
$a = \sigma_2/\sigma_1$ and $\bl W = [ \bl U_2 \bl U_2^\perp ] 
   [ \bl U_1 \bl U_1^\perp ]$.

It is first shown that a right invariant Haar measure over this group is 
given by the product of a measure $\mu_1$ over $\mathbb R^+$  having density 
$h(a) \propto 1/a$ with respect to Lebesgue measure, 
and the probability measure  $\mu_2$ over $\mathcal W$ 
induced 
by letting $\bl W \stackrel{d}{=} \bl W_K \otimes\cdots\otimes \bl W_1$, 
where each $\bl W_k$ has the invariant (uniform) probability measure over 
$\mathcal O_{n_k}$, independently for each $k=1,\ldots, K$. 
To see this, let $f$ be any measurable function of $(a,\bl W)$.  
For any $(b, \bl X ) \in  \mathbb R^+ \times \mathcal W$, 
\begin{align*}
\int f( ab, \bl W \bl X ) h(a) \, da \times \mu_2(d\bl W)  &= 
 \int f(\tilde a , \bl W \bl X ) h(\tilde a/b )  \, 
   \left | \frac{ da }{d\tilde a} \right | d\tilde a   \times 
  \mu_2(d {\bl W}), 
\end{align*} 
by the change of variables from $a$ to $\tilde a=ab$. 
Now  $da/d\tilde a  =1/b$ and 
$h(\tilde a/b) = bh(\tilde a)$, and so 
\begin{align*}
\int f( ab, \bl W \bl X ) h(a) \, da \times \mu_2(d\bl W)  &= 
 \int f(\tilde a , {\bl W\bl X} ) h(\tilde a)   \, 
    d\tilde a   \times 
   \mu_2(d {\bl W})     \\
&= \int f(a , {\bl W\bl X} ) h(a)   \, 
    d a   \times 
   \mu_2(d {\bl W}). 
\end{align*}  
Finally, if under $\mu_2$ the $\bl W_k$'s are independent and each
$\bl W_k$ 
has the invariant distribution over 
$\mathcal O_{n_k}$, then  
$\bl W_k \stackrel{d}{=} \bl W_k \bl X_k$ and  
$\bl W \stackrel{d}{=} \bl W  \bl X $. This gives
\begin{align*}
\int f( ab, \bl W \bl X ) h(a) \, da \times \mu_2(d\bl W)  &= 
 \int f( a, \bl W ) h(a) \, da \times \mu_2(d\bl W) 
\end{align*}
for all $(b,\bl X ) \in \mathbb R^+ \times \mathcal W$, thereby showing 
that the measure $\mu= \mu_1 \times \mu_2$ described above is the right 
invariant Haar measure over the set 
$\mathbb R^+ \times \mathcal W$ that indexes $\bar {\mathcal G}$. 

Following \citet[page 86]{eaton_1989}, 
the measure $\mu$ over values of $(a, \bl W)$ induces a
measure $\pi$ over $(\sigma ,\bl U)$, allowing
(\ref{eq:eqrisk}) to be re-expressed as 
(\ref{eq:bayesrisk}).  
The induced measure $\pi$ 
is given by
\begin{align*}
\pi( \{ (\sigma, \bl U): (\sigma, \bl U)  \in  A\times  B \})  &= 
\mu( \{ (a,\bl W) :  a\sigma_0 \in A , \bl W \bl U_0 \in B\})   \\
&=\mu_1( \{a : a\sigma_0 \in A \} ) \times  \mu_2(\{\bl W: \bl W\bl U_0 \in B\})
\end{align*}
for sets  $A\subset \mathbb R^+$ and $B\subset \mathcal U$ and 
arbitrary $(\sigma_0,\bl U_0) \in \mathbb R^+ \times \mathcal U$.
Letting $\sigma_0=1$, one sees that 
$\pi( \{ (\sigma, \bl U)  : \sigma \in A\}) = \mu_1(A)$. 
As for the distribution of $\bl U$ under $\pi$, 
let $\bl U_0 =  \bl J_{K} \otimes \cdots \otimes \bl J_1$ 
where $\bl J_k =  [ \bl I_{r_k\times r_k} \  \bl 0_{r_k \times n_k }]^T$. 
The distribution of $\bl U$ is therefore the same as that of 
$\bl W_K \bl J_K \otimes  \cdots \otimes \bl W_1 \bl J_1$, 
were for each $k$, $\bl W_k$ is uniform on $\mathcal O_{n_k}$. 
However, $\bl W_k \bl J_k$ is simply the $n_k\times r_k$ orthonormal matrix 
made up of the first $r_k$ columns of $\bl W_k$,  which has the 
uniform (invariant) distribution on $\mathcal V_{r_k,n_k}$. 
\end{proof}

\begin{proof}[Proof of Proposition \ref{prop:eqinf}] 
Letting $\theta=(\sigma, \bl U) $, 
the probability $\Pr(\theta \in \bar g A |g\bl y )$  is  
given by
\begin{equation} 
\Pr(\theta \in \bar g A |g\bl y) = 
 \frac{   \int 1(\theta \in \bar g A)  p( g \bl y | \theta, \bl s ) 
     \pi_I( d\theta) \pi_s(d\bl s)   }
      {   \int  p( g \bl y | \theta, \bl s ) 
    \pi_I( d\theta)  \pi_s(d\bl s)  }. 
\label{eq:pf2}
\end{equation}
Now  with $g\bl y = a \bl W \bl y$, one has
\begin{align*} 
p(g \bl y | \theta, \bl s) & =
 (2 \pi)^{-n/2} \sigma^{-n} \exp\{  -[  a^2 \bl y^T\bl y - 
2a \sigma \bl y^T \bl W^T\bl U \bl s +\sigma^2 \bl s^T\bl s ]/[ 2\sigma^2 ]\} \\
&= a^{-n} (2 \pi)^{-n/2} (\sigma/a)^{-n} \exp\{ 
    -[ \bl y^T\bl y - 2 (\sigma/a) \bl y^T \bl W^T \bl U \bl s + 
    (\sigma/a)^2 \bl s^T\bl s ]/[2(\sigma/a)^2 ] \}   \\
&=  a^{-n} p(\bl y | \bar g^{-1} \theta,\bl s) . 
\end{align*}
This constant  $a^{-n}$ appears in both the numerator and 
denominator of (\ref{eq:pf2}),  and so
\[ 
\Pr(\theta \in \bar g A |g\bl y) = 
\frac{\int 1(\bar g^{-1} \theta \in A)  p( \bl y | \bar g^{-1}\theta, \bl s ) 
    \pi_I( d\theta) \pi_s(d\bl s)  }
      {   \int  p(  \bl y |\bar g^{-1} \theta, \bl s ) 
    \pi_I( d\theta)  \pi_s(d\bl s)  } . 
\]
As $\pi_I$ was derived from the right invariant Haar measure over the 
transformations $\bar g$,  one has
\begin{align*}
\Pr(\theta \in \bar g A |g\bl y ) &=
\frac{ \Delta(\bar g^{-1})   \int 1(\theta \in A)  p( \bl y | \theta, \bl s ) 
    \pi_I( d\theta) \pi_s(d\bl s)   }
      {  \Delta(\bar g^{-1})  \int  p(  \bl y | \theta, \bl s ) 
    \pi_I( d\theta)  \pi_s(d\bl s)  }  \\
 &=  \frac{    \int 1(\theta \in A)  p( \bl y | \theta, \bl s ) 
    \pi_I( d\theta) \pi_s(d\bl s)   }
      {    \int  p(  \bl y | \theta, \bl s ) 
    \pi_I( d\theta)  \pi_s(d\bl s)  } 
=  \Pr(\theta \in A | \bl y), 
\end{align*}
where $\Delta$ is the Haar modulus. 
\end{proof}

\section{Description of GDELT data}
A full description of the GDELT project and data  can be found at
\href{http://gdelt.utdallas.edu/}{gdelt.utdallas.edu}. The data 
analyzed in this article were obtained from the historical backfiles 
of the 2012 data available 
at \href{http://gdelt.utdallas.edu/data/backfiles}{gdelt.utdallas.edu/data/backfiles}.
Attention was restricted to 
events involving governmental agencies 
of pairs of countries (their governments, militaries, police, judiciaries
or intelligence agencies). 

Each event was categorized as belonging to one of the following twenty CAMEO 
action types
\citep{schrodt_omur_gerner_hermrick_2008}:
 make public statement;
 appeal;
 express intent to cooperate;
 consult;
 engage in diplomatic cooperation;
 engage in material cooperation;
 provide aid;
 yield;
 investigate;
 demand;
 disapprove;
 reject;
 threaten;
 protest;
 exhibit force posture;
 reduce relations;
 coerce;
 assault;
 fight;
use unconventional mass violence.
In the political science literature it is standard to 
categorize the first nine of these as  ``positive'' and the last 
eleven as  ``negative''  
\citep{arva_2013}. 

For each ordered pair of countries, action type and week of the year, 
the number of 
days within the week  in which the type of relation occurred was recorded. 
This was done to reduce instances in which a single event was recorded 
multiple times in the dataset. 
The number of events in which each country participated was computed, 
from which the thirty  most active countries were identified. 
These included 
Afghanistan (AFG), 
Armenia (ARM), 
Australia (AUS),
Azerbaijan (AZE), 
Canada (CAN), 
China (CHN)
Germany (DEU), 
Egypt (EGY), France (FRA),  Great Britain (GBR), Indonesia (IDN), 
India (IND),
Iran (IRN),  Iraq(IRQ), Israel (ISR), Italy (ITA), Japan (JPN), 
South Korea (KOR), Libya (Libya), Nigeria (NGA), New Zealand (NZL), 
Pakistan (PAK), Philippines (PHL), Palestinian Occupied Territories (PSE), 
Russia (RUS), Saudi Arabia (SAU), Syria (SYR), Turkey (TUR), 
Ukraine (UKR) and the United States (USA).

\bibliographystyle{chicago}
\bibliography{/Users/hoff/Dropbox/SharedFiles/refs}

\begin{thebibliography}{}

\bibitem[\protect\citeauthoryear{Allen}{Allen}{2012}]{allen_2012}
Allen, G. (2012).
\newblock Regularized tensor factorizations and higher-order principal
  components analysis.
\newblock arXiv:1202.2476.

\bibitem[\protect\citeauthoryear{Arva, Beieler, Fisher, Lara, Schrodt, Song,
  Sowell, and Stehle}{Arva et~al.}{2013}]{arva_2013}
Arva, B., J.~Beieler, B.~Fisher, G.~Lara, P.~A. Schrodt, W.~Song, M.~Sowell,
  and S.~Stehle (2013).
\newblock Improving forecasts of international events of interest.
\newblock In {\em EPSA 2013 Annual General Conference Paper}, Volume~78.

\bibitem[\protect\citeauthoryear{Bhattacharya and Dunson}{Bhattacharya and
  Dunson}{2012}]{bhattacharya_dunson_2012}
Bhattacharya, A. and D.~B. Dunson (2012).
\newblock Simplex factor models for multivariate unordered categorical data.
\newblock {\em J. Amer. Statist. Assoc.\/}~{\em 107\/}(497), 362--377.

\bibitem[\protect\citeauthoryear{Cai, Cand{\`e}s, and Shen}{Cai
  et~al.}{2010}]{cai_candes_shen_2010}
Cai, J.-F., E.~J. Cand{\`e}s, and Z.~Shen (2010).
\newblock A singular value thresholding algorithm for matrix completion.
\newblock {\em SIAM J. Optim.\/}~{\em 20\/}(4), 1956--1982.

\bibitem[\protect\citeauthoryear{Chu and Ghahramani}{Chu and
  Ghahramani}{2009}]{chu_ghahramani_2009}
Chu, W. and Z.~Ghahramani (2009).
\newblock Probabilistic models for incomplete multi-dimensional arrays.
\newblock In {\em 12th International Conference on Artificial Intelligence and
  Statistics}, Volume~5, pp.\  89--96.

\bibitem[\protect\citeauthoryear{De~Lathauwer, De~Moor, and
  Vandewalle}{De~Lathauwer et~al.}{2000}]{delathauwer_demoor_vandewalle_2000}
De~Lathauwer, L., B.~De~Moor, and J.~Vandewalle (2000).
\newblock {A multilinear singular value decomposition}.
\newblock {\em SIAM Journal on Matrix Analysis and Applications\/}~{\em
  21\/}(4), 1253--1278.

\bibitem[\protect\citeauthoryear{Eaton}{Eaton}{1989}]{eaton_1989}
Eaton, M.~L. (1989).
\newblock {\em Group invariance applications in statistics}.
\newblock NSF-CBMS Regional Conference Series in Probability and Statistics, 1.
  Hayward, CA: Institute of Mathematical Statistics.

\bibitem[\protect\citeauthoryear{Fosdick and Hoff}{Fosdick and
  Hoff}{2012}]{fosdick_hoff_2012}
Fosdick, B. and P.~Hoff (2012).
\newblock Testing and modeling dependencies between a network and nodal
  attributes.
\newblock {\em {\rm To appear in} Annals of Applied Statistics\/}.

\bibitem[\protect\citeauthoryear{Hoff}{Hoff}{2007}]{hoff_2007a}
Hoff, P.~D. (2007).
\newblock Extending the rank likelihood for semiparametric copula estimation.
\newblock {\em Ann. Appl. Stat.\/}~{\em 1\/}(1), 265--283.

\bibitem[\protect\citeauthoryear{Hoff}{Hoff}{2008}]{hoff_2008b}
Hoff, P.~D. (2008).
\newblock Rank likelihood estimation for continuous and discrete data.
\newblock {\em ISBA Bulletin\/}~{\em 15\/}(1), 8--10.

\bibitem[\protect\citeauthoryear{Hoff}{Hoff}{2009}]{hoff_2009a}
Hoff, P.~D. (2009).
\newblock Simulation of the matrix {B}ingham-von {M}ises-{F}isher distribution,
  with applications to multivariate and relational data.
\newblock {\em J. Comput. Graph. Statist.\/}~{\em 18\/}(2), 438--456.

\bibitem[\protect\citeauthoryear{Hoff}{Hoff}{2011}]{hoff_2011b}
Hoff, P.~D. (2011).
\newblock Separable covariance arrays via the {T}ucker product, with
  applications to multivariate relational data.
\newblock {\em Bayesian Analysis\/}~{\em 6\/}(2), 179--196.

\bibitem[\protect\citeauthoryear{Josse and Sardy}{Josse and
  Sardy}{2013}]{josse_sardy_2013}
Josse, J. and S.~Sardy (2013).
\newblock Reduced rank matrix estimation by adaptive trace norm regularization.
\newblock arXiv:1310.6602.

\bibitem[\protect\citeauthoryear{Kolda and Bader}{Kolda and
  Bader}{2009}]{kolda_bader_2009}
Kolda, T.~G. and B.~W. Bader (2009).
\newblock Tensor decompositions and applications.
\newblock {\em SIAM Rev.\/}~{\em 51\/}(3), 455--500.

\bibitem[\protect\citeauthoryear{Leetaru and Schrodt}{Leetaru and
  Schrodt}{2013}]{leetaru_schrodt_2013}
Leetaru, K. and P.~Schrodt (2013).
\newblock {GDELT}: Global data on events, language, and tone, 1979-2012.
\newblock In {\em International Studies Association Annual Conference}, San
  Diego, CA.

\bibitem[\protect\citeauthoryear{Liu, Musialski, Wonka, and Ye}{Liu
  et~al.}{2009}]{liu_musialski_przemyslaw_wonka_ye_2009}
Liu, J., P.~Musialski, P.~Wonka, and J.~Ye (2009).
\newblock Tensor completion for estimating missing values in visual data.
\newblock In {\em Computer Vision, 2009 IEEE 12th International Conference on},
  pp.\  2114--2121. IEEE.

\bibitem[\protect\citeauthoryear{Mazumder, Hastie, and Tibshirani}{Mazumder
  et~al.}{2010}]{mazumder_hastie_tibshirani_2010}
Mazumder, R., T.~Hastie, and R.~Tibshirani (2010).
\newblock Spectral regularization algorithms for learning large incomplete
  matrices.
\newblock {\em J. Mach. Learn. Res.\/}~{\em 11}, 2287--2322.

\bibitem[\protect\citeauthoryear{Mu, Huang, Wright, and Goldfarb}{Mu
  et~al.}{2013}]{mu_huang_wright_goldfarb_2013}
Mu, C., B.~Huang, J.~Wright, and D.~Goldfarb (2013).
\newblock Square deal: Lower bounds and improved relaxations for tensor
  recovery.
\newblock arXiv:1311.5870.

\bibitem[\protect\citeauthoryear{Schrodt, Yilmaz, Gerner, and Hermreck}{Schrodt
  et~al.}{2008}]{schrodt_omur_gerner_hermrick_2008}
Schrodt, P.~A., {\"O}.~Yilmaz, D.~Gerner, and D.~Hermreck (2008).
\newblock The cameo (conflict and mediation event observations) actor coding
  framework.
\newblock In {\em 2008 Annual Meeting of the International Studies
  Association}.

\bibitem[\protect\citeauthoryear{Tomioka, Suzuki, Hayashi, and Kashima}{Tomioka
  et~al.}{2011}]{tomioka_suzuki_hayashi_kashima_2011}
Tomioka, R., T.~Suzuki, K.~Hayashi, and H.~Kashima (2011).
\newblock Statistical performance of convex tensor decomposition.
\newblock In {\em Advances in Neural Information Processing Systems}, pp.\
  972--980.

\bibitem[\protect\citeauthoryear{Tucker}{Tucker}{1966}]{tucker_1966}
Tucker, L. (1966).
\newblock {Some mathematical notes on three-mode factor analysis}.
\newblock {\em Psychometrika\/}~{\em 31\/}(3), 279--311.

\bibitem[\protect\citeauthoryear{Tucker}{Tucker}{1964}]{tucker_1964}
Tucker, L.~R. (1964).
\newblock {The extension of factor analysis to three-dimensional matrices}.
\newblock {\em Contributions to mathematical psychology\/}, 109--127.

\bibitem[\protect\citeauthoryear{Volfovsky and Hoff}{Volfovsky and
  Hoff}{2012}]{volfovsky_hoff_2012}
Volfovsky, A. and P.~Hoff (2012).
\newblock Hierarchical array priors for {ANOVA} decompositions.
\newblock {\em {\rm To appear in} Annals of Applied Statistics\/}.

\bibitem[\protect\citeauthoryear{Ward and Hoff}{Ward and
  Hoff}{2007}]{ward_hoff_2007}
Ward, M.~D. and P.~D. Hoff (2007).
\newblock Persistent patterns of international commerce.
\newblock {\em Journal of Peace Research\/}~{\em 44\/}(2), 157--175.

\bibitem[\protect\citeauthoryear{Xu, Yan, and Qi}{Xu
  et~al.}{2012}]{xu_yan_qi_2012}
Xu, Z., F.~Yan, and A.~Qi (2012, July).
\newblock Infinite tucker decomposition: Nonparametric bayesian models for
  multiway data analysis.
\newblock In J.~Langford and J.~Pineau (Eds.), {\em Proceedings of the 29th
  International Conference on Machine Learning}, ICML '12, pp.\  1023--1030.

\end{thebibliography}

\end{document}